\renewcommand*\env@matrix[1][c]{\hskip -\arraycolsep
	\let\@ifnextchar\new@ifnextchar
	\array{*\c@MaxMatrixCols #1}}
\newcolumntype{Y}{>{\centering\arraybackslash}X}
\newtheorem{theorem}{Theorem}
\newtheorem{assumption}{Assumption}
\newtheorem*{assumption*}{\assumptionnumber}
\providecommand{\assumptionnumber}{}
\title{A sequential test procedure for the choice of the number of regimes in multivariate nonlinear models}
\author[1]{Andrea Bucci}
\affil[1]{Department of Economics and Law, University of Macerata, Italy}
\date{}
\begin{document}
\maketitle

\begin{abstract}
This paper proposes a sequential test procedure for determining the number of regimes in nonlinear multivariate autoregressive models. The procedure relies on linearity and no additional nonlinearity tests for both multivariate smooth transition and threshold autoregressive models. We conduct a simulation study to evaluate the finite-sample properties of the proposed test in small samples. Our findings indicate that the test exhibits satisfactory size properties, with the rescaled version of the Lagrange Multiplier test statistics demonstrating the best performance in most simulation settings. The sequential procedure is also applied to two empirical cases, the US monthly interest rates and Icelandic river flows. In both cases, the detected number of regimes aligns well with the existing literature.

\end{abstract}

\noindent \footnotesize \textbf{JEL classification}: C12; C32; C34; C52 

\noindent \footnotesize \textbf{Keywords}: Nonlinear model; Regime Identification; Multivariate Time Series
\normalsize

\newpage
\section{Introduction}

Linear vector autoregressive models (VAR) have been a cornerstone in the analysis of multivariate time series for over four decades since the seminal paper by \cite{Sims1980}. Despite their widespread use, linear models often fail to capture the complexity of real-world data, particularly when relationships exhibit nonlinear dynamics. For example, financial asset prices respond asymmetrically to unexpected macroeconomic news \citep{Anderson1999}, hence requiring models that can accommodate such nonlinear behaviors.

Advances in computational power have facilitated the development of more complex models, such as the vector logistic smooth transition regression (VLSTR) and the vector threshold regression (VTR). These models offer greater flexibility by allowing for regime changes based on the value of a transition variable, see \cite{hute13} for a comprehensive review. Despite the increasing popularity of these models, their empirical application on real problems is yet limited, partly due to the challenges in model specification and the lack of robust tests for linearity and misspecification. 

Therefore, proper specification tests can be crucial for these models, which are not identified if the linear or a lower-regime model is the data-generating process \citep{Davies1987}. In the univariate context, a linearity test has been developed by \cite{lusate88}, while \cite{Eitrheim1996} construct misspecification tests for smooth transition autoregressive (STAR) models, including an error autocorrelation test, a test of no additional nonlinearity and a test against parameter non-constancy. For multivariate models, \cite{camacho04} and \cite{yate14} have extended these tests. The former has developed a modelling strategy for a bivariate VLSTAR model along with several misspecification tests employing an equation-by-equation approach. The latter builds upon Camacho's approach by extending the linearity and misspecification tests to a system-based approach and generalizing the tests beyond two time series. This provides additional flexibility in modelling complex multivariate nonlinear dependencies and permits capturing the possible nonlinear interactions between the variables, which may be missed in an equation-by-equation approach.

In this context, determining the number of regimes is not straightforward. Several attempts to identify the number of regimes have been made in the univariate framework in \cite{Hansen1999}, \cite{Gonzalo2002}, and \cite{Strikholm2006}. Inspired by the sequential test for structural breaks in \cite{baipe98}, these last two approaches suggest choosing the number of regimes starting from a linear model, \textit{i.e.} a single regime model, and testing iteratively between $m$ and $m+1$ regimes until rejection of the null hypothesis of $m$ regimes. This paper aims to fill the gap in the existing literature, proposing a easy-to-implement sequential procedure for the selection of the number of regimes in multivariate problems. The sequential procedure is a mere extension of the approach proposed in \cite{Strikholm2006} and applies both to smooth and abrupt regime-changing models. As for the univariate version in \cite{Strikholm2006}, the practitioner has full control of the asymptotic significance level of the test at each step.  We demonstrate that the finite-sample properties of the test procedure are satisfying either if the data are generated from a VLSTAR or a Threshold Vector Autoregressive model (TVAR).

One of the possible challenges of a system-based approach for the tests is the existence of stationarity and ergodicity conditions for the VLSTAR model. Although the papers from \cite{Saikkonen2008} and \cite{Kheifets2020} provide the conditions for stationarity and ergodicity in particular cases, the conditions for the general model are not available, therefore the test procedure proposed here works asymptotically properly only for a single-transition null hypothesis.

To validate our approach, we apply the sequential procedure to two empirical problems. On the one hand, we try to detect the number of regimes in US monthly interest rates \citep{Tsay1998}. On the other hand, the sequential procedure is applied to daily Icelandic river flow data, which have shown to be nonlinear in several former applications \citep{Tong1985, Tsay1998, teya14, LivingstonJr2020}. In both cases, the number of regimes detected overlaps with what was found in the related literature.

The paper is organized as follows. Section \ref{sec:VLSTAR} describes the vector logistic smooth transition autoregressive model. In Section \ref{sec:Linearity}, we define the linearity test, while the sequential test procedure is introduced in Section \ref{sec:Nregimes}. The tests are then applied to simulated data in Section \ref{sec:Simulation} to compute their empirical sizes, empirical powers and selection frequencies, and to real data in Section \ref{sec:Empirical}. Section \ref{sec:Conclusions} concludes.

\section{The VLSTAR model}\label{sec:VLSTAR}
A specification for the general VLSTR model can be found in \cite{teya14}. For ease of notation, in this study we do not include exogenous variables in the model, this means that we are analysing a vector logistic smooth transition autoregressive (VLSTAR) model. Let $\mathbf{y}_t$ be an $n \times 1$ vector of dependent variables, the VLSTAR model with $m$ regimes can be defined as follows:
\begin{align}\label{eq:VLSTAR}
	\mathbf{y}_t &= \bm{\mu}_0 + \sum_{j = 1}^{p}\mathbf{\Phi}_{0,j}\mathbf{y}_{t-j} + \mathbf{G}_t^{(1)}\left(\mathbf{s}_t; \bm{\gamma}_{1}, \mathbf{c}_1\right)\left[\bm{\mu}_1 + \sum_{j = 1}^{p}\mathbf{\Phi}_{1,j} \mathbf{y}_{t-j}\right] +\nonumber \\
	&\dots + \mathbf{G}_t^{(m-1)}\left(\mathbf{s}_t; \bm{\gamma}_{m-1}, \mathbf{c}_{m-1}\right)\left[\bm{\mu}_{m-1} + \sum_{j = 1}^{p}\mathbf{\Phi}_{m-1,j} \mathbf{y}_{t-j}\right] + \bm{\varepsilon}_t
\end{align}
where $\bm{\mu}_d$ is an $n \times 1$ vector of intercepts, for $d = 0, \ldots, m-1$, $\mathbf{\Phi}_{d,j}$ is an $n \times n$ matrix of parameters for the $j$-th lag and $\mathbf{G}_t^{(d)}\left(\mathbf{s}_t; \bm{\gamma}_{d}, \mathbf{c}_{d}\right)$ is a diagonal matrix of transition functions such that
\begin{equation}\label{eq:Gt}
	\mathbf{G}_t^{(d)}\left(\mathbf{s}_t; \bm{\gamma}_{d}, \mathbf{c}_{d}\right) = \text{diag}\left\{ g_{1,t}^{(d)}\left(s_{1,t}; \gamma_{1d}, c_{1d}\right), \ldots, g_{n,t}^{(d)}\left(s_{n,t}; \gamma_{nd}, c_{nd}\right)\right\}
\end{equation}
where $s_{i,t}$, for $i = 1, \ldots, n$, is a weakly stationary transition variable for the $i$-th equation, while $\gamma_{i d}$ and $c_{id}$ are respectively the slope parameter and the location parameter where the transitions occur for the $d$-th regime. The lagged values of $\mathbf{y}_t$, or a combination of them \citep{camacho04, Kheifets2020}, are usually chosen as $s_{i,t}$ for smooth transition models. However, a stationary exogenous variable can also be used, and, according to \cite{He2008}, a temporal trend such as $s_{i,t} = t/T$ can be employed as well without violating the asymptotic theory. In this case, the VLSTAR model can be considered a special case of a time-varying autoregressive (TV-VAR) model and, for $\gamma_{i,d} \rightarrow \infty$, the changes of regimes identify structural breaks in the model. This could provide a good alternative to already existing methods for the identification of co-shifting in multivariate time series \citep{Hendry1998}.

The elements of $\mathbf{G}_t^{(d)}$ in Eq. \eqref{eq:Gt} are usually specified as standard logistic functions\footnote{We use standard logistic functions because of their simplicity, but a more general version of the logistic function can also be used \citep{He2008}.}
\begin{equation*}
	g_{i,t}^{(d)}\left(s_{i,t}; \gamma_{id}, c_{id}\right) = \left[1 + \exp\left\{-\gamma_{id}\left(s_{i,t}-c_{id}\right)\right\}\right]^{-1}, \quad \gamma_{id} >0.
\end{equation*}
This specification is extremely flexible, since for $\gamma_{d} \rightarrow \infty$, $\forall d$, the diagonal elements of $\mathbf{G}^{(d)}_t\left(\mathbf{s}_t; \bm{\gamma}_{d}, \mathbf{c}_{d}\right)$ (for ease of notation we will refer to this function as $\mathbf{G}^{(d)}_t$) approach the indicator function, $\mathbbm{1}(s_{i,t}> c_{id})$, thus the model becomes a vector threshold autoregressive (VTAR) model as the one introduced by \cite{Tsay1998}, while for $\gamma_{d} \rightarrow 0$, the model becomes a simple VAR. This means that the approach proposed in this study based on a smooth transition model can also be implemented for the selection of the number of regimes in a VTAR, for $\gamma_{d}$ sufficiently large (see Section \ref{sec:VTAR} for a discussion).

Model \eqref{eq:VLSTAR} can be reparametrized in the following form
\begin{equation}\label{eq:VLSTAR2}
	\mathbf{y}_t = \left\{\sum_{d=1}^{m}\mathbf{G}_t^{(d-1)}\mathbf{B}_r'\right\}\mathbf{x}_t + \bm{\varepsilon}_t = \left[\mathbf{I}_n \ \mathbf{G}_t^{(1)} \ldots \ \mathbf{G}_t^{(m-1)}\right] \begin{bmatrix}
		\mathbf{B}_1 \\ \mathbf{B}_2 \\ \vdots \\ \mathbf{B}_m
	\end{bmatrix} \mathbf{x}_t + \bm{\varepsilon}_t = \mathbf{\Psi}_t' \mathbf{B}' \mathbf{x}_t + \bm{\varepsilon}_t
\end{equation}
where $\mathbf{\Psi}_t = \left(\mathbf{I}_n, \mathbf{G}_{t}^{(1)}, \ldots, \mathbf{G}_t^{(m-1)}\right)'$ is a $m n \times n$ matrix, $\mathbf{I}_n$ is an $n \times n$ identity matrix, $\mathbf{x}_t = \left[1, \mathbf{y}_{t-1}', \mathbf{y}_{t-2}', \ldots, \mathbf{y}_{t-p}'\right]'$ is a $(1+p n) \times 1$ vector and $\mathbf{B} = \left(\mathbf{B}_1, \mathbf{B}_2, \ldots, \mathbf{B}_m\right)$ is a $(1 + p  n) \times m  n$ matrix of parameters, where $\mathbf{B}_d = \left(\bm{\mu}_d', \mathbf{\Phi}_{d,1}', \ldots, \mathbf{\Phi}_{d,p}'\right)'$. Setting $\mathbf{G}_t^{(0)} = \mathbf{I}_n$ indicates that no transitions are allowed before the first change of regime. The set of parameters to be estimated is $\bm{\theta} = \left\{\mathbf{B}, \bm{\Gamma}, \mathbf{C}\right\}$, where $\bm{\Gamma}$ and $\mathbf{C}$ are $n \times m$ matrices of parameters of the transition functions. 

The linearity and additive nonlinearity testing problems in the model \eqref{eq:VLSTAR} concern testing the additive ($m-1$)-th component, therefore the null hypothesis is that $\bm{\mu}_{m-1} = \mathbf{0}$, $\mathbf{\Phi}_{m-1,j} = \mathbf{0}$, $j = 1, \ldots, p$, in which case $\mathbf{G}_t^{(m-1)}$ is not identified since it contains unidentified nuisance parameters. Equivalently, the null hypothesis can be specified as $H_0 \colon \gamma_{i, m-1} = 0$, consequently $\mathbf{G}_t^{(m-1)} = (1/2)\mathbf{I}_n$, where $\mathbf{I}_n$ is an $n \times n$ identity matrix. This implies that the model is not identified because the linear component contains too many parameters that cannot be estimated consistently. The fact that a null hypothesis can be specified in different ways indicates a lack of identification of model \eqref{eq:VLSTAR} under the null hypothesis. This problem, firstly studied by \cite{Davies1987} and \cite{Watson1985}, has the direct consequence that the standard asymptotic inference does not hold as the asymptotic distribution of the test is not known under the null. To overcome it, \cite{Hansen1996} has provided an empirical null distribution by simulation and has given the asymptotic theory for inference. Nevertheless, this method is computationally demanding and applies only in the case of a common transition function among all the equations, $\mathbf{G}_t = g(s_t|\gamma,c)\mathbf{I}_n$, so the number of nuisance parameters is restricted to two. Alternatively, the use of a Taylor series approximation around the null and a Lagrange multiplier (LM) test has been used to circumvent the identification problem, see \cite{lusate88}, \cite{tera94} for the univariate smooth transition model. Recently, \cite{Seong2022} consider testing both the null hypotheses in a univariate smooth transition model and combining the results in a single quasi-likelihood ratio test statistic \citep{Cho2011a, White2012}. Following \cite{lusate88} and \cite{Strikholm2006}, we propose to approximate the logistic function in the alternative hypothesis through a $L$-order Taylor approximation around $\gamma_i = 0$, as further discussed in Section \ref{sec:Linearity}.

\section{Linearity test}\label{sec:Linearity}

In our sequential procedure the linearity test is the first step, since the smooth transition model is not identified if the linear model is the true data-generating process. When the system foresees a different transition variable for each equation, linearity can be tested equation-by-equation through the test introduced by \cite{lusate88}. Otherwise, the joint linearity test introduced in \cite{yate14} can be performed when a single transition variable is used. In the next sections, we deepen the theory behind the linearity and no additional nonlinearity tests already proposed in \cite{yate14}.

\subsection{Testing linearity with a common transition variable}
The asymptotic normality of the score used to derive the test statistic is guaranteed under the regularity conditions provided by \cite{Basawa1976} and the Assumptions in the following Section.

By considering a 2-regime model (\textit{i.e.}, $m = 2$), Eq. \eqref{eq:VLSTAR2} becomes
\begin{equation}\label{eq:VLSTAR3}
\mathbf{y}_t = \mathbf{B}_1' \mathbf{x}_t + \mathbf{G}_t \mathbf{B}_2' \mathbf{x}_t + \bm{\varepsilon}_t.
\end{equation}
Testing linearity in Eq. \eqref{eq:VLSTAR3} equals testing the null hypothesis $\text{H}_0: \gamma_i = 0$, $i = 1, \ldots, n$. Under the null, we have that $\mathbf{G}_t = \left(1/2\right)\mathbf{I}_n$ and that Eq. \eqref{eq:VLSTAR3} is linear, meaning that the null hypothesis creates an identification problem for the parameters in the linear combination $\mathbf{B}_1 + (1/2) \mathbf{B}_2$ and for the location parameter, $c_i$. As already pointed out above, this identification problem can be overcome by approximating the logistic function through an $L$-order Taylor approximation around $\gamma_i = 0$, such that
\begin{equation*}
g_{i,t}\left(s_t \mid \gamma_i, c_i\right) \approx \sum_{l = 0}^{L}  \upsilon_{i, l}s_t^l + r_{i,t}
\end{equation*}
where $\upsilon_{i,0}, \ldots, \upsilon_{i,L}$ are the coefficients and $r_{i,t}$ is the reminder term. This means that $\mathbf{G}_t$ can be written as follows:
\begin{align}\label{eq:approxGt}
\mathbf{G}_t &\approx \text{diag}\left\{\sum_{l=0}^{L}\upsilon_{1, l}s_t^{l} + r_{1,t}, \ldots, \sum_{l=0}^{L}\upsilon_{n, l}s_t^{l} + r_{n,t} \right\} \nonumber\\
&\approx \sum_{l = 0}^{L} \mathbf{\Upsilon}_{l} s_t^{l} + \mathbf{R}_t
\end{align}
where $\mathbf{\Upsilon}_{l} = \text{diag}\left(\upsilon_{1,l}, \ldots, \upsilon_{n, l}\right)$ and $\mathbf{R}_t = \text{diag}\left(r_{1,t}, \ldots, r_{n,t}\right)$. Inserting Eq. \eqref{eq:approxGt} in \eqref{eq:VLSTAR3} yields:
\begin{align}\label{eq:approxVLSTAR}
\mathbf{y}_t &= \mathbf{B}_1'\mathbf{x}_t + \left(\sum_{l=0}^{L}\mathbf{\Upsilon}_{l}s_t^{l} + \mathbf{R}_t\right)\mathbf{B}_2'\mathbf{x}_t + \bm{\varepsilon}_t \nonumber\\
&= \left(\mathbf{B}_1' + \mathbf{\Upsilon}_0\mathbf{B}_2'\right)\mathbf{x}_t + \sum_{l = 1}^{L}\mathbf{\Upsilon}_l \mathbf{B}_2' \mathbf{x}_t s_t^l + \mathbf{R}_t \mathbf{B}_2'\mathbf{x}_t + \bm{\varepsilon}_t \nonumber\\
&= \mathbf{D}_0'\mathbf{x}_t + \sum_{l = 1}^{L}\mathbf{D}_l'\mathbf{x}_t s_t^l + \bm{\varepsilon}_t^*
\end{align}
where $\mathbf{D}_0 = \mathbf{B}_1 + \mathbf{B}_2 \mathbf{\Upsilon}_0'$, $\mathbf{D}_l = \mathbf{B}_2 \mathbf{\Upsilon}_l'$ and $\bm{\varepsilon}_t^* = \mathbf{R}_t \mathbf{B}_2'\mathbf{x}_t + \bm{\varepsilon}_t$. In the auxiliary VAR in Eq. \eqref{eq:approxVLSTAR}, testing linearity is equal to testing the null hypothesis $\text{H}_0\colon \mathbf{D}_1 = \dots = \mathbf{D}_L  = \mathbf{0}$. Under the null hypothesis $\mathbf{R}_t = \mathbf{0}$, therefore the error term is $\bm{\varepsilon}_t^* = \bm{\varepsilon}_t$, so that the distributional properties of the error process are not affected by the Taylor approximation under the null hypothesis.

Denoting $\mathbf{Y} = \left(\mathbf{y}_1, \ldots, \mathbf{y}_T\right)'$, $\mathbf{X} = \left(\mathbf{x}_1, \ldots, \mathbf{x}_T\right)'$, $\mathbf{E}^* = \left(\bm{\varepsilon}_1^*, \ldots, \bm{\varepsilon}_T^*\right)'$, $\mathbf{\tilde{D}}_L = \left(\mathbf{D}_1', \ldots, \mathbf{D}_L'\right)'$, and
\begin{equation}\label{eq:Z}
\mathbf{Z}_L = \begin{bmatrix}
\mathbf{x}_1's_1 & \mathbf{x}_1's_1^2 & \ldots &\mathbf{x}_1's_1^L\\
\mathbf{x}_2's_2 & \mathbf{x}_2's_2^2 & \ldots &\mathbf{x}_2's_2^L\\
\vdots & \vdots & \ddots & \vdots\\
\mathbf{x}_T's_T & \mathbf{x}_T's_T^2 & \ldots & \mathbf{x}_T's_T^L
\end{bmatrix},
\end{equation}
Eq. \eqref{eq:approxVLSTAR} can be written as
\begin{equation}\label{eq:VLSTAR4}
\mathbf{Y} = \mathbf{X} \mathbf{D}_0 + \mathbf{Z}_L \mathbf{\tilde{D}}_L + \mathbf{E}^*.
\end{equation}

The null hypothesis is $\mathbf{\tilde{D}}_L = \mathbf{0}$, while the subscript in $\mathbf{Z}$ and $\mathbf{\tilde{D}}$ indicates the order of the Taylor expansion.

Let $\bm{\theta} = \left(\mathbf{d}_0',\mathbf{d}_1'\right)' \in \Theta$ be the unknown parameters of the model \eqref{eq:VLSTAR4} with the true values $\bm{\theta}_0$, where $\mathbf{d}_0 =\text{vec}(\mathbf{D}_0)$, $\mathbf{d}_1 = \text{vec}(\mathbf{\tilde{D}}_L)$, $\Theta = \Theta_{\mathbf{d}_0} \times \Theta_{\mathbf{d}_1}$ is the parametric space with $\Theta_0 \in \mathbbm{R}^{\tau_0}$ and $\Theta_1 \in \mathbbm{R}^{\tau_1}$, with $\tau_0 = (1+pn)n$ and $\tau_1 = (1+pn)n + 2n$. Below, we assume that $\Theta_0$ and $\Theta_1$ are compact and $\bm{\theta}_0$ is an interior point of $\Theta$. To compute a test for the null hypothesis, the log-likelihood of model \eqref{eq:VLSTAR4} for $T$ observations must be specified as follows
\begin{equation}\label{eq:LL}
	\ell_T(\bm{\theta}) = \sum_{t=1}^{T}\ell_t(\bm{\theta}) = k - \frac{1}{2}\sum_{t=1}^{T}\log |\mathbf{\Omega}_t| - \frac{1}{2}\sum_{t=1}^{T}\bm{\varepsilon}_t'\mathbf{\Omega_t}^{-1}\bm{\varepsilon}_t
\end{equation}
where 
\begin{equation*}
\bm{\varepsilon}_t = \mathbf{y}_t - \mathbf{D}_0'\mathbf{x}_t - \sum_{l = 1}^L\mathbf{D}_l'\mathbf{x}_ts_t^l = \mathbf{y}_t - \mathbf{D}_0'\mathbf{x}_t - \mathbf{\tilde{D}}_L'\mathbf{z}_t,
\end{equation*}
with $\mathbf{z}_t = \left(\mathbf{x}_t's_t, \mathbf{x}_t's_t^2, \ldots, \mathbf{x}_t's_t^l\right)'$ and $E\left\{\bm{\varepsilon}_t\bm{\varepsilon}_t' | \mathcal{F}_{t-1}\right\} = \mathbf{\Omega}_t$ is a positive definite covariance matrix, with $\lim_{T \rightarrow \infty}(1/T)\sum_{t=1}^{T}\mathbf{\Omega}_t = \mathbf{\Omega}$, see the following Assumption 3 for further details. Consequently, the limiting covariance matrix can be estimated from $(1/T)\sum_{t=1}^{T}\hat{\bm{\varepsilon}}_t\hat{\bm{\varepsilon}}_t'$ and can be used in the construction of the test statistic.

We need to specify the following assumptions in order to define an LM test.

\begin{assumption}
	The log-likelihood $\ell_T(\bm{\theta})$, defined as in Eq. \eqref{eq:LL}, is twice continuously differentiable with respect to $\bm{\theta}$ in an open neighbourhood of $\mathbf{D}_1 = \mathbf{0}$.
\end{assumption}

\begin{assumption}
	The maximum likelihood estimators of the parameters $\mathbf{D}_0$ are consistent under the null hypothesis $\mathbf{D}_1 = 0$.
\end{assumption}
\begin{assumption}
	The stochastic sequence $\left\{\bm{\varepsilon}_t\right\}$ is a martingale difference sequence with respect to an increasing sequence of  $\sigma$-fields, $\mathcal{F}_t$ with
	\begin{equation*}
		\underset{t}{\sup} \ E\left\{|\varepsilon_{i,t}|^{2+\alpha}| \mathcal{F}_{t-1}\right\} < \infty \qquad \text{a.s.}
	\end{equation*}
	for some $\alpha > 0$ and $i = 1, \ldots, n$, with $E\left\{\bm{\varepsilon}_t\bm{\varepsilon}_t'|\mathcal{F}_{t-1}\right\} = \mathbf{\Omega}_t$, where $\mathbf{\Omega}_t$ is a positive definite matrix with the following asymptotic limit
	\begin{equation*}
		\lim_{T\rightarrow\infty}(1/T)\sum_{t=1}^{T}\mathbf{\Omega}_t = \mathbf{\Omega} \qquad \text{a.s.}
	\end{equation*}
	for some positive definite matrix $\mathbf{\Omega}$.
\end{assumption}

\begin{assumption}
$\mathbf{X'X}$ and $\mathbf{Z}_L'(\mathbf{I} - \mathbf{P}_{\mathbf{X}})\mathbf{Z}_L$, where $\mathbf{P}_X$ is the limiting projection matrix of $\mathbf{X}$, $\mathbf{P}_{\mathbf{X}} = \mathbf{X}(\mathbf{X'X})^{-1}\mathbf{X}'$, are positive definite matrices.
\end{assumption}

Assumption 2 is a high-level assumption, while Assumption 3 guarantees the existence of the second moments for $\mathbf{y}_t$ and the convergence of the sample moments to their true values \citep{He2008} and permits the use of asymptotic theory for a martingale difference sequence (MDS), even when the assumption of i.i.d. errors is not valid, \textit{e.g.}, in the case of conditionally heteroskedastic errors \citep{Wang2022}. Assumption 4 is a moment condition: for instance, if the model is a VLSTAR and $s_t = y_{i,t-d}$, $d >0$, this implies that $\mathbf{y}_t$ has a finite $2(L+1)$-th moment.

The block of the score vector involving the parameters under test, $\tilde{\bm{\theta}}$, can be written as follows

\begin{align}
\frac{\partial \ell_T (\tilde{\bm{\theta}})}{\partial \mathbf{\tilde{D}}_L} &= -\frac{\partial}{\partial \mathbf{\tilde{D}}_L}(1/2)\sum_{t=1}^T \bm{\varepsilon}_t'\bm{\Omega}^{-1}\bm{\varepsilon}_t = \frac{\partial}{\partial \mathbf{\tilde{D}}_L}\sum_{t=1}^T\mathbf{z}_t' \mathbf{\tilde{D}}_L \bm{\Omega}^{-1}\bm{\varepsilon}_t \nonumber\\
&= \sum_{t=1}^{T}\mathbf{z}_t \bm{\varepsilon}_t' \bm{\Omega}^{-1} = \mathbf{Z}_L'\mathbf{E}\bm{\Omega}^{-1} \label{eq:Derivative}
\end{align}
see for example \cite{Lutkepohl1996} and Appendix \ref{sec:Score}. Evaluated under $H_0$, the score obtained in Eq. \eqref{eq:Derivative} becomes
\begin{equation*}
\frac{\partial \ell_T (\tilde{\bm{\theta}})}{\partial \mathbf{\tilde{D}}_L}\mid_{H_0} = \sum_{t=1}^{T}\mathbf{z}_t\hat{\bm{\varepsilon}}_t'\hat{\bm{\Omega}}^{-1} = \mathbf{Z}_L'\hat{\mathbf{E}}\hat{\bm{\Omega}}^{-1}
\end{equation*}
where $\hat{\mathbf{E}} = \left(\hat{\bm{\varepsilon}}_1, \hat{\bm{\varepsilon}}_2, \ldots, \hat{\bm{\varepsilon}}_T\right)'$, $\hat{\bm{\varepsilon}}_t = \mathbf{y}_t - \hat{\mathbf{D}}_0'\mathbf{x}_t$, and $\hat{\bm{\Omega}} = (1/T)\sum_{t=1}^{T}\hat{\bm{\varepsilon}}_t\hat{\bm{\varepsilon}}_t'$. The matrix $\hat{\mathbf{D}}_0$ is the maximum likelihood (ML) estimator of $\mathbf{D}_0$ under the null hypothesis. The consistency of the ML estimator is guaranteed under the stationarity conditions provided by \cite{Kheifets2020}.

Under Assumptions 1-4, the score vector is asymptotically normally distributed with $n \cdot \text{cd}(\mathbf{Z}_L)$ degrees of freedom, where $\text{cd}(\mathbf{Z}_L)$ is the column dimension of $\mathbf{Z}_L$ \citep{Breusch1980}. As the score is normal and $\mathbf{Z}_L'(\mathbf{I}_T - \mathbf{P_{\mathbf{Z}}})\mathbf{Z}_L$ is positive definite, the vectorised LM test statistic
\begin{equation}\label{eq:LM_lin}
LM_L = \text{vec}\left(\hat{\mathbf{E}}'\mathbf{Z}_L\right)'\left\{\left(\mathbf{Z}_L'(\mathbf{I}_T-\mathbf{P}_{\mathbf{X}})\mathbf{Z}_L\right)\otimes \hat{\bm{\Omega}}\right\}^{-1}\text{vec}\left(\hat{\mathbf{E}}'\mathbf{Z}_L\right)
\end{equation}
has an asymptotic $\chi^2$-distribution with $n \cdot \text{cd}(\mathbf{Z}_L)$ degrees of freedom when the null hypothesis holds.

The statistics in \eqref{eq:LM_lin} can also be written as follows:
\begin{align}
LM_L &= \text{vec}\left(\hat{\mathbf{E}}'\mathbf{Z}_L\right)'\left\{\left(\mathbf{Z}_L'(\mathbf{I}_T-\mathbf{P}_{\mathbf{X}})\mathbf{Z}_L\right)\otimes \hat{\bm{\Omega}}\right\}^{-1}\text{vec}\left(\hat{\mathbf{E}}'\mathbf{Z}_L\right) \nonumber\\
&= \text{vec}\left(\hat{\mathbf{E}}'\mathbf{Z}_L\right)'\left\{\left(\mathbf{Z}_L'(\mathbf{I}_T-\mathbf{P}_{\mathbf{X}})\mathbf{Z}_L\right)^{-1}\otimes \hat{\bm{\Omega}}^{-1}\right\}\text{vec}\left(\hat{\mathbf{E}}'\mathbf{Z}_L\right) \nonumber\\
&= \text{vec}\left(\hat{\mathbf{E}}'\mathbf{Z}_L\right)'\text{vec}\left\{\hat{\bm{\Omega}}^{-1}\hat{\mathbf{E}}'\mathbf{Z}_L\left(\mathbf{Z}_L'(\mathbf{I}_T-\mathbf{P}_{\mathbf{X}})\mathbf{Z}_L\right)^{-1} \right\} \nonumber\\
&= \text{tr}\left\{\mathbf{Z}_L'\hat{\mathbf{E}}\hat{\bm{\Omega}}^{-1}\hat{\mathbf{E}}'\mathbf{Z}_L\left(\mathbf{Z}_L'(\mathbf{I}_T -\mathbf{P}_{\mathbf{X}})\mathbf{Z}_L\right)^{-1}\right\} \nonumber\\
&= \text{tr}\left\{\hat{\bm{\Omega}}^{-1}\hat{\mathbf{E}}'\mathbf{Z}_L\left[\mathbf{Z}_L'(\mathbf{I}_T -\mathbf{P}_{\mathbf{X}})\mathbf{Z}_L\right]^{-1}\mathbf{Z}_L'\hat{\mathbf{E}}\right\}. \label{eq:LM_lin2}
\end{align}
It should be noted that vectorisation and Kronecker products in Eq. \eqref{eq:LM_lin} are avoided in \eqref{eq:LM_lin2}. Then, we have the following result:
\begin{theorem}
The LM test statistic for the null hypothesis, $\text{H}_0 \colon \gamma_i = 0$, $i = 1, \ldots, n$ in Eq. \eqref{eq:VLSTAR3}, or $\text{H}_0 \colon \mathbf{D}_L = \mathbf{0}$ in Eq. \eqref{eq:VLSTAR4}, can be computed as follows:
\begin{equation}\label{eq:LMlinearity}
	LM_L = \text{tr}\left\{\mathbf{\hat{\Omega}}^{-1}\left(\mathbf{Y}-\mathbf{X}\mathbf{\hat{D}}_0\right)'\mathbf{Z}_L\left[\mathbf{Z}_L'\left(\mathbf{I}_T - \mathbf{P}_X\right)\mathbf{Z}_L\right]^{-1}\mathbf{Z}_L'\left(\mathbf{Y}-\mathbf{X}\mathbf{\hat{D}}_0\right)\right\} 
\end{equation}
where $\mathbf{\hat{D}_0}$ is the estimate of $\mathbf{D}_0$. Under the null hypothesis the test statistic has a $\chi^2$-distribution with $L n \left(1+ n p\right)$ degrees of freedom.
\end{theorem}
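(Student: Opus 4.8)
The plan is to split the statement into two claims: the algebraic identity producing the closed form \eqref{eq:LMlinearity}, and the distributional claim that the statistic is asymptotically $\chi^2$ with $L n (1+np)$ degrees of freedom. The first claim is essentially bookkeeping and has already been carried out in the passage from \eqref{eq:LM_lin} to \eqref{eq:LM_lin2}: starting from the score evaluated under $H_0$, I reduce the quadratic form using $\text{vec}(\mathbf{A}\mathbf{B}\mathbf{C}) = (\mathbf{C}'\otimes\mathbf{A})\text{vec}(\mathbf{B})$ and $(\mathbf{A}\otimes\mathbf{B})^{-1}=\mathbf{A}^{-1}\otimes\mathbf{B}^{-1}$, and then substitute the restricted residuals $\hat{\mathbf{E}}=\mathbf{Y}-\mathbf{X}\hat{\mathbf{D}}_0$ to obtain \eqref{eq:LMlinearity}. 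The degrees of freedom follow from a column count: each $\mathbf{x}_t$ is $(1+pn)\times 1$, and stacking the powers $s_t,\ldots,s_t^L$ gives $\text{cd}(\mathbf{Z}_L)=L(1+pn)$, so the tested score block has dimension $n\cdot\text{cd}(\mathbf{Z}_L)=L n(1+np)$.

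The substantive content is the distributional claim, which I would establish through the standard Lagrange multiplier construction for the Gaussian quasi-likelihood \eqref{eq:LL}. First I would show that the normalised score block $T^{-1/2}\text{vec}(\hat{\mathbf{E}}'\mathbf{Z}_L)$ is asymptotically normal under $H_0$. Since $\{\bm{\varepsilon}_t\}$ is a martingale difference sequence by Assumption 3 and $\mathbf{z}_t$ is $\mathcal{F}_{t-1}$-measurable, the summands $\mathbf{z}_t\bm{\varepsilon}_t'$ form a martingale difference array; the $(2+\alpha)$ conditional-moment bound of Assumption 3 together with the moment condition of Assumption 4 (which supplies the requisite moments of $\mathbf{z}_t$, hence of the products $\mathbf{z}_t\bm{\varepsilon}_t'$) verifies the conditional-variance and Lindeberg conditions of the martingale central limit theorem of \cite{Basawa1976}, delivering asymptotic normality of the score at the true parameter.

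Next I would pass from the true-parameter score to the restricted score at $\hat{\mathbf{D}}_0$ and identify the correct limiting covariance. A first-order expansion around $\bm{\theta}_0$, justified by the twice-differentiability of Assumption 1 and the consistency of $\hat{\mathbf{D}}_0$ under the null from Assumption 2, shows that estimating $\mathbf{D}_0$ orthogonalises the score against the column space of $\mathbf{X}$. Concretely, the information matrix of \eqref{eq:LL} for $(\mathbf{d}_0,\mathbf{d}_1)$ is block-Kronecker, each block being a Kronecker product of the relevant regressor cross-product ($\mathbf{X}'\mathbf{X}$, $\mathbf{X}'\mathbf{Z}_L$, $\mathbf{Z}_L'\mathbf{Z}_L$) with $\hat{\bm{\Omega}}^{-1}$; forming the Schur complement of the $\mathbf{d}_0$-block and applying the mixed-product rule collapses the cross terms into the projection, so that the factor $\mathbf{Z}_L'(\mathbf{I}_T-\mathbf{P}_{\mathbf{X}})\mathbf{Z}_L$ appears in the covariance, positive definite by Assumption 4. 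The statistic \eqref{eq:LM_lin} is then the score weighted by the inverse of a consistent estimator of its covariance; being a quadratic form in an asymptotically normal vector weighted by the inverse of its own limiting covariance, it converges to a $\chi^2$ with degrees of freedom equal to the rank of that covariance, namely $L n(1+np)$, which completes the argument.

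The main obstacle is the joint execution of the second and third steps: verifying the martingale central limit theorem for the partialled-out score and showing that the estimation of the nuisance parameters $\mathbf{D}_0$ contributes exactly the projection $\mathbf{I}_T-\mathbf{P}_{\mathbf{X}}$ (rather than the raw $\mathbf{Z}_L'\mathbf{Z}_L$) to the limiting variance. This is precisely where Assumption 4 is needed to control the polynomial regressors $s_t^l\mathbf{x}_t$ built from powers of the transition variable, and where the asymptotic linearity of $\hat{\mathbf{D}}_0$ from Assumptions 1--2 must be invoked; once these are in place, the remaining reduction of \eqref{eq:LM_lin} to \eqref{eq:LMlinearity} and the degrees-of-freedom count are routine.
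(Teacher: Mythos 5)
Your proposal is correct, but it reaches the key covariance structure by a different route than the paper. The paper's proof (Appendix \ref{sec:AppendixA}) never Taylor-expands the score around the true parameter: because the restricted model is linear, it substitutes the closed form $\hat{\mathbf{D}}_0 = (\mathbf{X}'\mathbf{X})^{-1}\mathbf{X}'\mathbf{Y}$ directly into the score, so that $\mathbf{Z}_L'(\mathbf{Y}-\mathbf{X}\hat{\mathbf{D}}_0) = \mathbf{Z}_L'(\mathbf{I}_T-\mathbf{P}_X)\mathbf{E}$ holds as an exact finite-sample identity; the projection $\mathbf{I}_T-\mathbf{P}_X$ then enters the variance by pure algebra, $\text{vec}(\mathbf{E}')$ is taken (matrix-)normal with covariance $\mathbf{I}_T\otimes\bm{\Omega}$, and the statistic is obtained by standardising $\mathbf{S}=(\mathbf{Z}'(\mathbf{I}-\mathbf{P}_X)\mathbf{Z})^{-1/2}\mathbf{Q}\hat{\bm{\Omega}}^{1/2}$ and taking $\text{tr}\{\mathbf{S}'\mathbf{S}\}$. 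You instead run the generic LM machinery: martingale CLT for the score at $\bm{\theta}_0$, a first-order expansion to account for estimating $\mathbf{D}_0$, and the Schur complement of the block-Kronecker information matrix to produce $\mathbf{Z}_L'(\mathbf{I}_T-\mathbf{P}_X)\mathbf{Z}_L\otimes\bm{\Omega}$. Your route is more work but buys two things the paper's shortcut does not: it would survive a restricted estimator without a closed form, and it actually uses Assumption 3 (MDS errors, averaged covariance limit) rather than the stronger exact-Gaussian, homoskedastic $\mathcal{N}(\mathbf{0},\mathbf{I}_T\otimes\bm{\Omega})$ posited inside the paper's proof, so it is arguably more faithful to the stated assumptions. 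The paper's route buys exactness of the partialling-out step and brevity. Two minor points to tighten: your claim that $\mathbf{z}_t$ is $\mathcal{F}_{t-1}$-measurable requires $s_t$ to be predetermined (lagged endogenous or exogenous and independent of $\bm{\varepsilon}_t$), which should be stated; and note that the paper's appendix proof is written only for $L=1$ (hence its $\chi^2_{n(1+np)}$), whereas your degrees-of-freedom count $n\cdot\text{cd}(\mathbf{Z}_L)=Ln(1+np)$ handles general $L$ directly, which is the version the theorem actually asserts.
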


\begin{proof}
See Appendix \ref{sec:AppendixA}.
\end{proof}

In an asymptotically equivalent way, the test can be performed also in the $TR^2$-form as follows
\begin{enumerate}
	\item Estimate the restricted model under the null hypothesis. Collect the residuals $\mathbf{\hat{\bm{\varepsilon}}}_t = \mathbf{y}_t - \mathbf{X} \hat{\mathbf{D}}_0$. Compute the matrix residual sum of squares $\mathbf{\hat{E}}'\mathbf{\hat{E}}$, where $\mathbf{\hat{E}} = \left[\mathbf{\hat{\bm{\varepsilon}}}_1, \ldots, \mathbf{\hat{\bm{\varepsilon}}}_T\right]'$.
	\item Regress $\mathbf{\hat{E}}$ on $\mathbf{X}$ and $\mathbf{Z}_L$. Collect the residuals, $\mathbf{\hat{\Xi}}$, and form the matrix residual sum of squares $\mathbf{\hat{\Xi}}'\mathbf{\hat{\Xi}}$.
	\item Compute the test statistic
	\begin{align}\label{eq:LMTR1}
		LM_{TR^2} &= T \cdot \text{tr}\left\{\left(\mathbf{\hat{E}}'\mathbf{\hat{E}}\right)^{-1}\left(\mathbf{\hat{E}}'\mathbf{\hat{E}}-\mathbf{\hat{\Xi}}'\mathbf{\hat{\Xi}}\right)\right\} \nonumber\\
		&=  T\left(n - \text{tr}\left\{\left(\mathbf{\hat{E}}'\mathbf{\hat{E}}\right)^{-1}\mathbf{\hat{\Xi}}'\mathbf{\hat{\Xi}}\right\}\right).
	\end{align}
\end{enumerate}
In this setting, the choice of $L$ is somewhat arbitrary. A higher order will increase the column dimension of $\mathbf{Z}_L$, but rejecting the null hypothesis would become easier, since a higher order often increases the the test. On the other hand, a lower order may lead to a test with better size properties, because it uses fewer parameters. As further discussed in \cite{lusate88} for the univariate case, choosing $L=1$ is not a good choice when $s_t = y_{t-d, i}$ for some $1 \leq d \leq p$ and for $i = 1, \ldots, n$, since the LM statistic has only trivial power against this alternative. The problem is typically solved by choosing a third-order Taylor expansion.

A special case of a VLSTAR with a common transition variable is the one with a transition function that is common to all the equations, \textit{i.e.}, $\mathbf{G}_t = g_t(s_t; \gamma, c) \mathbf{I}_n$ with $g(s_t; \gamma, c)$ being a scalar. In such a case, we have that
	\begin{equation*}
		g_t(s_t; \gamma, c) = \sum_{l = 0}^{L}\upsilon_{l} s_t^l + r_t
	\end{equation*}  
	which leads to $\mathbf{\Upsilon}_l = \upsilon_l \mathbf{I}_n$ and $\mathbf{R}_t = r_t \mathbf{I}_n$. Inserting these elements in Eq. \eqref{eq:approxVLSTAR}, the construction of the LM-type statistic remains the same as above.

\section{Determining the number of regimes}\label{sec:Nregimes}

Once rejected the null of linearity, the practitioner should account for the possible presence of some nonlinearity not gathered from a 2-regime model. This means that there may exist an additional nonlinear component that enters the model additively. In this paper, we build upon the additive nonlinearity test introduced by \cite{yate14} which can also be used in a sequential procedure for the detection of the number of regimes. Following the findings by \cite{baipe98}, \cite{Strikholm2006} suggest the use of a sequential testing procedure for additive nonlinearity in the univariate framework. We here extend such a procedure in the multivariate framework to both test additional nonlinearity and specify the number of regimes, $m$. 

If the equations do not share the same transition variable, identifying the number of regimes is not straightforward and a suitable choice would be to select the minimum number of regimes identified in an equation-by-equation test. When a common transition variable is assumed throughout the system, the number of regimes can be identified from the following procedure which mainly extends in the multivariate framework the sequential test proposed by \cite{Strikholm2006}. We further discuss in Section \ref{sec:VTAR} how this procedure can be applied also in the case of a VTAR model as the data-generating process. It should be noticed that the stationarity conditions are available only for a two-regime VLSTAR model \citep{Kheifets2020}, this means that the consistency of the ML estimator, and the stability of the LM test results are guaranteed only for $H_0\colon m = 2$. Consequently, the tests can be only used to suggest the presence of at least three regimes.

The sequential testing procedure can start directly from the case of linearity testing against a 2-regime model. Hence, the first step of the procedure foresees the implementation of the linearity test shown in Eq. \eqref{eq:LMlinearity} to test the null hypothesis of $m = 1$ against $m = 2$. If $\text{H}_0$ is rejected at a given level, $\alpha$, there could exist additive nonlinearity in the model. Therefore, the purpose of the practitioner may be sequentially testing for $m-1$ versus $m$ regimes until a non-rejection.

If we write Eq. \eqref{eq:VLSTAR} for $m = 3$ regimes as follows
\begin{equation}\label{eq:VLSTAR5}
	\mathbf{y}_t = \mathbf{B}_1' \mathbf{x}_t + \mathbf{G}_t^{(1)} \mathbf{B}_2' \mathbf{x}_t + \mathbf{G}_t^{(2)} \mathbf{B}_3' \mathbf{x}_t + \bm{\varepsilon}_t,
\end{equation} 
testing for non-additive nonlinearity equals to test $\text{H}_0 \colon \gamma_{2,i} = 0$, $i = 1, \ldots, n$, against the alternative $H_1 \colon \exists \gamma_{2,i} > 0$. Clearly, the test can be extended to a generic number of $m$ regimes.

As for the linearity test in Section \ref{sec:Linearity}, the alternative model is not identified under the null hypothesis. Once again, Taylor's approximation of $\mathbf{G}_t^{(2)}$ allows us to overcome this problem and obtain a feasible test statistic. Using an $L$-order Taylor approximation, Eq. \eqref{eq:VLSTAR5} becomes
\begin{equation}\label{eq:VLSTAR6}
	\mathbf{y}_t = \mathbf{B}_1' \mathbf{x}_t + \mathbf{G}_t^{(1)} \mathbf{B}_2' \mathbf{x}_t + \left(\sum_{l = 0}^{L}\mathbf{\Upsilon}_{l}^{(2)}s_t^{l} + \mathbf{R}_t^{(2)}\right) \mathbf{B}'_{3}\mathbf{x}_t + \bm{\varepsilon}_t
\end{equation}
where $\mathbf{\Upsilon}_{l}^{(2)}$ is the diagonal matrix of coefficients of the $L$-order Taylor expansion of $g_{i,t}^{(2)}$. As for the linearity test, the null hypothesis implies $\mathbf{\Upsilon}_l^{(2)} = \mathbf{0}$ for $l = 1, \ldots, L$. By reparametrizing, Eq. \eqref{eq:VLSTAR6} can be written as
\begin{align}\label{eq:VLSTAR7}
	\mathbf{y}_t  &= \mathbf{B}_1' \mathbf{x}_t + \mathbf{G}_t^{(1)} \mathbf{B}_2' \mathbf{x}_t + \mathbf{G}_t^{(2)} \mathbf{B}'_{3} \mathbf{x}_t+ \varepsilon_t \nonumber\\
	&= \mathbf{B}_1' \mathbf{x}_t + \mathbf{G}_t^{(1)} \mathbf{B}_2' \mathbf{x}_t + \left(\sum_{l = 0}^{L}\mathbf{\Upsilon}_{l}^{(2)}s_t^{l} + \mathbf{R}_t^{(2)}\right) \mathbf{B}'_{3}\mathbf{x}_t + \bm{\varepsilon}_t \nonumber\\
	&= \left(\mathbf{B}_1' + \mathbf{G}_t^{(1)}\mathbf{B}_2' + \mathbf{\Upsilon}_0^{(2)}\mathbf{B}'_{3}\right)\mathbf{x}_t + \sum_{l = 1}^{L}\mathbf{\Upsilon}_{l}^{(2)}\mathbf{B}'_{3}\mathbf{x}_ts_t^l + \mathbf{R}_t^{(2)}\mathbf{B}'_{3}\mathbf{x}_t + \bm{\varepsilon}_t \nonumber\\
	&= \mathbf{\Psi}_0'\mathbf{x}_t + \sum_{l = 1}^{L}\mathbf{\Psi}_{l}'\mathbf{x}_t s_t^l + \bm{\varepsilon}_t^*
\end{align}
where $\mathbf{\Psi}_0 = \left(\mathbf{B}_1' + \mathbf{G}_t^{(1)}\mathbf{B}_2' + \mathbf{\Upsilon}_0^{(2)}\mathbf{B}'_{3}\right)'$, $\mathbf{\Psi}_l = \left(\mathbf{\Upsilon}_{l}^{(2)}\mathbf{B}'_{3}\right)'$ and $\bm{\varepsilon}_t^* = \mathbf{R}_t^{(2)}\mathbf{B}'_{3}\mathbf{x}_t + \bm{\varepsilon}_t$. 

The null hypothesis in the VAR in Eq. \eqref{eq:VLSTAR7} is $\text{H}_0 \colon \mathbf{\Psi}_1 = \ldots = \mathbf{\Psi}_L = 0$. Let be  $\mathbf{Y} = \left(\mathbf{y}_1', \ldots, \mathbf{y}_T'\right)'$, $\mathbf{X} = \left(\mathbf{x}_1', \ldots, \mathbf{x}_T'\right)'$, $\mathbf{E}$ the $T \times n$ matrix of residuals from Eq. \eqref{eq:VLSTAR7}, and $\mathbf{Z}_L$ as in Eq. \eqref{eq:Z}, model \eqref{eq:VLSTAR7} can be written as
\begin{equation}\label{eq:VLSTAR8}
	\mathbf{Y} = \mathbf{X} \mathbf{\Psi}_0 + \mathbf{Z}_L \mathbf{\Psi}_L + \mathbf{E}.
\end{equation}
Let suppose that 
\begin{equation}\label{eq:Kderiv}
\mathbf{K} = \left[\text{vec}\left(\partial \mathbf{\Psi}_1' \mathbf{B}' \mathbf{x}_1/\partial \bm{\theta}\right)' \ldots \ \text{vec}\left(\partial \mathbf{\Psi}_T' \mathbf{B}' \mathbf{x}_T/\partial \bm{\theta}\right)'\right],
\end{equation}
and that $\mathbf{P}_{\mathbf{K}} = \mathbf{K}(\mathbf{K'K})^{-1}\mathbf{K}'$, the test statistic can be computed similarly to the one in Section \ref{sec:Linearity}, therefore we can state the following result:

\begin{theorem}
	If the estimates of the parameters in Eq. \eqref{eq:VLSTAR8} are consistent, under the null $\text{H}_0 \colon \mathbf{\Psi}_L =\mathbf{ 0}$, the LM test statistic for non-additive nonlinearity
	\begin{equation}\label{eq:Nonadditive}
		LM_L = \text{tr}\left\{\mathbf{\hat{\Omega}}^{-1}\mathbf{\hat{E}}'\mathbf{Z}_L\left[\mathbf{Z}_L'\left(\mathbf{I}_T - \mathbf{P}_K\right)\mathbf{Z}_L\right]^{-1}\mathbf{Z}_L' \mathbf{\hat{E}}\right\}
	\end{equation}
	has an asymptotic $\chi^2$-distribution with $L n(1 + n p)$ degrees of freedom under the Assumptions 1-3 from Section \ref{sec:Linearity}, and under the assumption that $\mathbf{K}'\mathbf{K}$ and $\mathbf{Z}_L'(\mathbf{I}_T - \mathbf{P}_K)\mathbf{Z}_L$ are positive definite matrices.
\end{theorem}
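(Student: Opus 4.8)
The plan is to follow the same LM-construction strategy used for the linearity test in Section \ref{sec:Linearity}, now treating the parameters of the already-present transition component $\mathbf{G}_t^{(1)}$ — together with the intercepts and autoregressive matrices $\mathbf{B}_1,\mathbf{B}_2$ — as nuisance parameters that must be partialled out. The only structural novelty relative to Theorem 1 is that the gradient (Jacobian) of the fitted conditional mean of the nuisance $(m-1)$-regime model with respect to the full nuisance vector $\bm{\theta}$ is no longer the regressor matrix $\mathbf{X}$ but the matrix $\mathbf{K}$ defined in Eq. \eqref{eq:Kderiv}; consequently the projection that annihilates the nuisance directions becomes $\mathbf{P}_{\mathbf{K}}$ rather than $\mathbf{P}_{\mathbf{X}}$.

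First I would derive the score block for the tested parameters. Differentiating the Gaussian quasi-log-likelihood \eqref{eq:LL} with respect to $\mathbf{\Psi}_L$ and evaluating at the restricted estimator $\hat{\bm{\theta}}$ gives, exactly as in Eq. \eqref{eq:Derivative}, $\partial\ell_T/\partial\mathbf{\Psi}_L\,|_{H_0} = \mathbf{Z}_L'\hat{\mathbf{E}}\hat{\bm{\Omega}}^{-1}$, where $\hat{\mathbf{E}}$ collects the residuals of the two-regime model fitted under $H_0$. Because the nuisance block of the score vanishes at $\hat{\bm{\theta}}$ by the first-order conditions, the concentrated (effective) score for $\mathbf{\Psi}_L$ is obtained by orthogonalising $\mathbf{Z}_L$ against $\mathbf{K}$: a Taylor expansion of the full score about the true value, combined with $\sqrt{T}$-consistency of $\hat{\bm{\theta}}$ and the Schur complement of the information matrix (in which the cross block is proportional to $\mathbf{Z}_L'\mathbf{K}$ and the nuisance block to $\mathbf{K}'\mathbf{K}$), replaces the naive information block $\propto\mathbf{Z}_L'\mathbf{Z}_L$ by $\mathbf{Z}_L'\mathbf{Z}_L - \mathbf{Z}_L'\mathbf{K}(\mathbf{K}'\mathbf{K})^{-1}\mathbf{K}'\mathbf{Z}_L = \mathbf{Z}_L'(\mathbf{I}_T - \mathbf{P}_{\mathbf{K}})\mathbf{Z}_L$, so that $\text{vec}(\hat{\mathbf{E}}'\mathbf{Z}_L)$ acquires an asymptotic covariance proportional to $\{\mathbf{Z}_L'(\mathbf{I}_T-\mathbf{P}_{\mathbf{K}})\mathbf{Z}_L\}\otimes\bm{\Omega}$.

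Second, I would establish asymptotic normality of the standardised concentrated score. Under Assumption 3 the innovations $\{\bm{\varepsilon}_t\}$ form a martingale difference sequence with the stated conditional-moment bound, so the martingale central limit theorem — the same device underlying Theorem 1 and the regularity conditions of \cite{Basawa1976} — applies to $T^{-1/2}\mathbf{Z}_L'\mathbf{E}$, while the assumed positive definiteness of $\mathbf{K}'\mathbf{K}$ and $\mathbf{Z}_L'(\mathbf{I}_T-\mathbf{P}_{\mathbf{K}})\mathbf{Z}_L$ guarantees that the requisite inverses exist and the limiting covariance is nonsingular. The statistic \eqref{eq:Nonadditive} is precisely the quadratic form of this asymptotically normal vector weighted by the inverse of its limiting covariance, and therefore converges to a $\chi^2$ law; its degrees of freedom equal the number of free entries of $\mathbf{\Psi}_L$, namely $\dim\,\text{vec}(\mathbf{\Psi}_L) = n\cdot\text{cd}(\mathbf{Z}_L) = Ln(1+np)$. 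The passage from the vectorised Kronecker form to the trace form in \eqref{eq:Nonadditive} then reproduces verbatim the algebra of \eqref{eq:LM_lin2}.

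The step I expect to be the crux is the substitution of the true nuisance parameters by their restricted ML estimates while showing that this produces exactly the correction $\mathbf{I}_T - \mathbf{P}_{\mathbf{K}}$ and leaves the $\chi^2$ limit intact. This needs (i) $\sqrt{T}$-consistency and asymptotic linearity of $\hat{\bm{\theta}}$ under $H_0$, which — as the theorem's hypothesis makes explicit and as the discussion in Section \ref{sec:Nregimes} stresses — is only assured for the two-regime null through the stationarity results of \cite{Kheifets2020}; and (ii) a uniform law of large numbers guaranteeing that the sample information and the estimated gradient $\hat{\mathbf{K}}$ converge to their population limits, a role here taken over by the assumed positive definiteness of $\mathbf{K}'\mathbf{K}$ and $\mathbf{Z}_L'(\mathbf{I}_T-\mathbf{P}_{\mathbf{K}})\mathbf{Z}_L$ (replacing Assumption 4 of the linearity test). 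This convergence is delicate because $\mathbf{K}$ mixes $\mathbf{x}_t$, powers of $s_t$, and the derivatives of the logistic $\mathbf{G}_t^{(1)}$; verifying in addition that the Taylor remainder $\mathbf{R}_t^{(2)}$ vanishes under $H_0$, so that $\bm{\varepsilon}_t^* = \bm{\varepsilon}_t$, is the final ingredient that keeps the error process, and hence the limiting distribution, unaffected by the approximation.
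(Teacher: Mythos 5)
Your proposal is correct and follows essentially the approach the paper intends: the paper gives no separate proof of this theorem, asserting only that "the test statistic can be computed similarly to the one in Section \ref{sec:Linearity}" with the projection $\mathbf{P}_{\mathbf{X}}$ replaced by $\mathbf{P}_{\mathbf{K}}$, which is exactly the substitution you derive. Your Schur-complement/score-linearisation argument in fact supplies the step the paper leaves implicit — namely why, when the restricted model is nonlinear in the nuisance parameters (so no closed-form estimator exists as in the proof of Theorem 1), the partialling-out must be done against the Jacobian $\mathbf{K}$ of Eq.~\eqref{eq:Kderiv} rather than against $\mathbf{X}$.
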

The asymptotic distribution of the LM statistic has the desired null distribution only when $m = 2$ in testing $\mathbf{G}_t^{(m-1)} = (1/2)\mathbf{I}_n$. There are moment conditions for the asymptotic distribution theory to be valid \citep{Eitrheim1996}. In the univariate case, a STAR model with logistic-type transition functions must satisfy the condition $E(\varepsilon_t^8) < \infty$. A sufficient condition in the multivariate case is $E(\varepsilon_{i,t}^8) < \infty$, for $i = 1, \ldots, n$.
To compute $K$ as in Eq. \eqref{eq:Kderiv}, the vector of first-order partial derivatives of $\mathbf{\Psi}_t' \mathbf{B}' \mathbf{x}_t$ is necessary. This can be found in Appendix \ref{sec:firstderiv}. As further discussed in Section \ref{sec:Linearity}, the test can also be performed using the $TR^2$-form in a multi-step regression problem.

The LM test statistics can be, hence, used in a top-down sequential testing procedure that foresees testing the null hypothesis of $\gamma_{m-1,i} = 0$ for a growing number of regimes until non-rejection. Therefore, the number of regimes to be included in the model is the minimum for which the null hypothesis of no-additive nonlinearity cannot be rejected.

\subsection{Applying the sequential procedure for a vector threshold autoregressive model}\label{sec:VTAR}
The tests presented in this study are also valid if the practitioner is analysing a VTAR model. Since the VLSTAR nests the VTAR for $\gamma_{d}$ sufficiently large, the idea is to apply the tests directly on a VLSTAR approximation of the VTAR. This should also solve the drawback of finding the first derivative of the indicator function in the VTAR model or using a bootstrap procedure as proposed for the univariate framework in \cite{Giannerini2024}. Let's suppose a 2-regime VTAR model with a single transition variable and a not-switching error term, defined as follows
	\begin{equation*}
\mathbf{y}_t = \left(\bm{\mu}_1 + \sum_{j = 1}^{p}\mathbf{\Phi}_{1,j}\mathbf{y}_{t-j}\right)\mathbbm{1}\left(s_t \leq c\right) + \left(\bm{\mu}_2 + \sum_{j=1}^{p}\mathbf{\Phi}_{2,j}\mathbf{y}_{t-j}\right)\mathbbm{1}\left(s_t > c\right) + \bm{\varepsilon}_t
	\end{equation*}
which can alternatively be written as
\begin{equation}\label{eq:VTAR}
\mathbf{y}_t = \bm{\mu}_1^* + \sum_{j=1}^{p}\mathbf{\Phi}^*_{1,j}\mathbf{y}_{t-j} + \left(\mu_2^* + \sum_{j=1}^{p}\mathbf{\Phi}_{2,j}^*\mathbf{y}_{t-j}\right)\mathbbm{1}(s_t > c) + \bm{\varepsilon}_t.
\end{equation}
It follows that the indicator function $\mathbbm{1}(\cdot)$ in Eq. \eqref{eq:VTAR} can be approximated by a logistic function where the slope parameter $\gamma$ is fixed and equal to a sufficiently large value. Consequently, the estimated parameters of the approximation are consistent under the same assumptions of the VLSTAR model \cite[see also][]{lusate88}. This means that the aforementioned tests for a VLSTAR model can also be applied when the data-generating process is a VTAR. Moreover, in the sequential procedure for choosing the number of regimes, the estimates of the threshold parameters are super-consistent and assuming them known makes it easier to test $m = m_0-1$ against $m = m_0$, where $m_0 \geq 2$. 

Consider a single lag three-regime VTAR model (where, for simplicity, we imposed $\bm{\mu}_1 = \bm{\mu}_2 = \bm{\mu}_3 = \mathbf{0}$)
\begin{equation*}
\mathbf{y}_t = \left(\mathbf{\Phi}_1\mathbf{y}_{t-1}\right) \mathbbm{1}(s_t \leq c_1) + \left(\mathbf{\Phi}_2\mathbf{y}_{t-1}\right) \mathbbm{1}(c_1 <s_t\leq c_2) + \left(\mathbf{\Phi}_3\mathbf{y}_{t-1}\right) \mathbbm{1}(s_t >c_2) + \bm{\varepsilon}_t
\end{equation*}
which can be reformulated as
\begin{equation}\label{eq:VTARref}
\mathbf{y}_t = \mathbf{\Phi}_1^*\mathbf{y}_{t-1} + \left(\mathbf{\Phi}_2^*\mathbf{y}_{t-1}\right)\mathbbm{1}(s_t > c_1) + \left(\mathbf{\Phi}_3^*\mathbf{y}_{t-1}\right)\mathbbm{1}(s_t > c_2) + \bm{\varepsilon}_t
\end{equation}
with $c_1 <c_2$. The sequential test procedure can be summarized in the following steps:
\begin{enumerate}
\item Set $\mathbf{\Phi}_3^* = \mathbf{0}$ in Eq. \eqref{eq:VTARref} and approximate the indicator function $\mathbbm{1}(s_t >c_1)$ with $\mathbf{G}_t^{(1)}(s_t;\bm{\gamma}_1, \mathbf{c}_1) = g_t^{(1)}(s_t; \gamma_1, c_1)\mathbf{I}_n$, where $g_t^{(1)}$ is a standard logistic function, such that
\begin{equation*}
	\mathbf{y}_t = \mathbf{\Phi}_1^*\mathbf{y}_{t-1}  + \left(\mathbf{\Phi}_2^*\mathbf{y}_{t-1}\right)\mathbf{G}_t^{(1)}(s_t;\bm{\gamma}_1, \mathbf{c}_1) + \bm{\varepsilon}_t.
\end{equation*}
Then, apply the linearity test presented in Section \ref{sec:Linearity} for a VLSTAR model, imposing $H_0 \colon \gamma_1 = 0$.
\item If the null hypothesis is rejected at a given significance level $\alpha$, estimate the coefficients in \eqref{eq:VTARref} model imposing $\mathbf{\Phi}_3^*=\mathbf{0}$. As further detailed in \cite{Chan1993} and \cite{Gonzalo2002}, the threshold estimator, $\hat{c}_1$, is super consistent.
\item Use the super consistent estimator, $\hat{c}_1$, in Eq. \eqref{eq:VTARref} and test the linearity of the following model
\begin{equation*}
\mathbf{y}_t = \mathbf{\Phi}_1^*\mathbf{y}_{t-1} + \left(\mathbf{\Phi}_2^*\mathbf{y}_{t-1}\right)\mathbbm{1}(s_t > \hat{c}_1) + \left(\mathbf{\Phi}_3^*\mathbf{y}_{t-1}\right)\mathbf{G}_t^{(2)}(s_t; \bm{\gamma}_2, \mathbf{c}_2) + \bm{\varepsilon}_t
\end{equation*}
where $\mathbf{G}_t^{(2)}(s_t;\bm{\gamma}_2, \mathbf{c}_2) = g_t^{(2)}(s_t; \gamma_2, c_2)\mathbf{I}_n$.
\item If the null hypothesis, $H_0 \colon \gamma_2 = 0$, is rejected at a significance level, estimate the parameters in Eq. \eqref{eq:VTARref} and use the super consistent estimate, $\hat{c}_2$, to test the linearity of \eqref{eq:VTARref} against a four-regime model.
\item Continue the procedure until a non-rejection.
\end{enumerate}
This, indeed, extends in the multivariate the sequential procedure for the definition of the number of thresholds proposed by \cite{Strikholm2006}.

\section{Simulation Study}\label{sec:Simulation}

In vector models, the standard LM-type tests can be strongly oversized when the null hypothesis foresees the estimation of a large set of parameters and when the size of the sample is not large. In practice, the nominal size of the test tends to overestimate the true probability of type I error in finite samples, see also \cite{Honda1988}. As in \cite{Laitinen1978} and \cite{Meisner1979}, to overcome this limitation, we use a Bartlett-type correction that allows to rescale the degrees of freedom of the test and apply an $F$-statistic. In a Monte Carlo simulation study conducted by \cite{Bera1981}, the authors show that this correction is able to correct the oversize of LM. 

The Laitinen-Meisner correction consists of a degree of freedom rescaling of the form $(nT - S)/(W \times nT)$, where $n$ and $T$ are defined as before, $S$ is the number of parameters, and $W$ is the number of restrictions, see \cite{Laitinen1978} and \cite{Meisner1979}. The $F$-type LM test statistic, or rescaled LM test statistic, can be computed as
\begin{equation}\label{eq:LMadj}
	LM_L^{\text{resc}} = LM_L \cdot \frac{nT - S}{W \times nT}
\end{equation}
and follows an $F\left(W, nT - S\right)$ distribution.

We carry out some simulation experiments for the finite-sample performance of the test procedure. Specifically, we investigate the empirical size and the power of the test, and we also report the selection frequencies for the sequential procedure. We also consider a Wilks' lambda test statistic based on Wilks' $\Lambda$-distribution \citep{anderson2003}. In Appendix \ref{sec:Ftests}, we show that Wilks' $\Lambda$ is applicable in our testing situation and how the test is performed in our framework.

To compute the empirical sizes of the sequential test procedure, we generate $1000$ replications from the model specified in Eq. \eqref{eq:VLSTAR5}. We do not include any explanatory variable and we select a single lag for the simulation of $n = 3$ dependent variables, such that:
\begin{equation}\label{eq:Simulation}
	\mathbf{y}_t = \mathbf{B}_{1}'\mathbf{y}_{t-1} + \mathbf{G}_t^{(1)}\mathbf{B}_{2}'\mathbf{y}_{t-1} + \mathbf{G}_t^{(2)}\mathbf{B}_{3}'\mathbf{y}_{t-1} + \ldots +\mathbf{G}_t^{(m-1)} \mathbf{B}_{m1}'\mathbf{y}_{t-1} + \bm{\varepsilon}_t.
\end{equation}
where $\bm{\varepsilon}_t \sim \mathcal{N}\left(0, \mathbf{I}_n\right)$, thus we are supposing uncorrelated errors.

For each realization, we estimate the VLSTAR model and compute the residuals matrix. Since the VLSTAR model is estimated numerically, relatively large samples are required for a reasonable estimation accuracy, therefore we choose $T = 400, 600, 1000$.

In a first attempt, we assess the empirical size of the test procedure by simulating $\mathbf{y}_t$ from Eq. \eqref{eq:Simulation} with $m = 2$ and we test the null hypothesis of $m = 2$ against $m = 3$. The results are reported in Table \ref{table:Empiricalm2}. 
The DGP relies on a parameter matrix $\mathbf{B}_1$ with 0.1 entries and diagonal values $\rho_i$, for $i = 1, \ldots, n$, sampled either from a uniform distribution $\mathcal{U}(0.3, 0.5)$ or from $\mathcal{U}(0.5,0.8)$, we also set $\mathbf{B}_2 = -\mathbf{B}_1$ 
and we use $c = \gamma = 2$. Finally, the common transition variable is generated by an exogenous first-order AR process, such that
\begin{equation}\label{eq:st}
	s_t = 0.95 s_{t-1} + \eta_t
\end{equation}
where $\eta_t \sim \mathcal{N}(0, 1)$. As in \cite{Strikholm2006}, we use three different nominal sizes, $\alpha = 0.10, 0.05, 0.01$. To compute the $TR^2$-form of the test statistics, we use a third-order Taylor expansion, therefore $L = 3$.

Analysing the empirical sizes of the test in Table \ref{table:Empiricalm2}, it can be noticed that these are close to the nominal values in all of the three tests. When simulating $T = 1000$ observations, the empirical size of the LM test statistics (first three columns) and the Wilks' statistics (last three columns) almost coincides with the nominal size for $\alpha = 0.10$ and $\alpha = 0.05$, while the rescaled LM has an empirical size close to the real one for $T = 400$ and $T=600$. The persistence level of the dependent variables seems to be relevant, since a diagonal value $\rho_i \sim \mathcal{U}(0.5, 0.8)$ leads to slightly divergent results in terms of empirical size.

We further evaluate the empirical power of the tests in finite-size samples by applying it to simulated data from Eq. \eqref{eq:Simulation} with $m=3$ regimes (with $\mathbf{B}_3 = -0.7\cdot \mathbf{I}_n$, $\gamma_2= 2$ and $c_2 = 4$). The power of the test is then calculated by testing the null hypothesis of $m = 2$ regimes. The results in Table \ref{table:Power} suggest that the LM test has a good empirical power and that this increases with sample size. This is not entirely surprising since the VLSTAR model, especially with higher-order regimes, requires the estimation of a large set of parameters  to obtain reasonable estimation accuracy.

\begin{table}[H]
	\centering
		\caption{\textbf{Empirical size of additive nonlinearity test. The empirical size is in per cent based on 1000 replications from model \eqref{eq:Simulation} with $m = 2$ regimes and $n = 3$}}\label{table:Empiricalm2}
		\begin{tabular}{lrrr@{\hskip 0.3cm}rrr@{\hskip 0.3cm}rrr}
			\toprule
& \multicolumn{3}{c}{$LM_3$} & \multicolumn{3}{c}{$LM_3^{\text{resc}}$} & \multicolumn{3}{c}{Wilks}\\
\cmidrule(lr){2-4} \cmidrule(lr){5-7} \cmidrule(lr){8-10}\\
& 10\%  & 5\% & 1\% & 10\%& 5\% &    1\%& 10\%     & 5\%    &    1\%\\
\midrule

$T = 400$                                    &      &      &      &      &      &      \\[2pt]
$\rho_i \sim \mathcal{U}(0.3, 0.5)$&  14.1  & 7.6 & 1.6 & 9.9 & 5.1 & 0.7 & 14.0 & 7.8 & 1.6\\
$\rho_i \sim \mathcal{U}(0.5, 0.8)$&  15.1  & 8.6 & 1.7 & 11.2& 5.9 & 1.2 & 14.9 & 8.8 & 1.8\\[5pt]

$T = 600$                                     &      &      &      &      &       &      \\[2pt]
$\rho_i \sim \mathcal{U}(0.3, 0.5)$& 10.4  & 6.3 & 1.6 & 8.8 & 5.0 & 1.1 & 10.5 & 6.3 & 1.6\\
$\rho_i \sim \mathcal{U}(0.5, 0.8)$& 12.4  & 7.1 & 2.4 & 10.2& 5.4 & 2.0 & 12.3 & 6.9 & 2.4\\[5pt]

$T = 1000$                                     &      &      &      &      &       &      \\[2pt]
$\rho_i \sim \mathcal{U}(0.3, 0.5)$& 9.1 & 4.9 & 1.5 & 7.8 & 4.4 & 0.9 & 9.0 & 4.9 & 1.5\\
$\rho_i \sim \mathcal{U}(0.5, 0.8)$& 8.3 & 4.1 & 0.7 & 7.4 & 3.6 & 0.6 & 8.3 & 4.2 & 0.7\\
			\bottomrule  
			\multicolumn{10}{p{0.7\textwidth}}{Note: Data are simulated from a model with $m = 2$ regimes and we test the null of $\text{H}_0 \colon m = 2$ against $H_1 \colon m = 3$. We simulate using different values of $\mathbf{B}_1$ and for $\mathbf{B}_2 = - \mathbf{B}_1$.}  
		\end{tabular}
\end{table}

\begin{table}[H]
	\centering
	\caption{\textbf{Power of the additive nonlinearity test at different sample sizes based on 1000 replications from model \eqref{eq:Simulation} with $m = 3$ regimes and $n = 3$, testing the null $\text{H}_0 \colon m = 2$}}\label{table:Power}
	\begin{tabularx}{0.67\textwidth}{lYYY}
		\toprule
		& $LM_3$ & $LM_3^{\text{resc}}$  & Wilks\\
\midrule
$T = 400$ & \\[2pt]
$\rho_i \sim \mathcal{U}(0.3, 0.5)$& 85.7 & 83.0 & 84.8\\
$\rho_i \sim \mathcal{U}(0.5, 0.8)$& 89.5 & 86.2 & 89.0\\[5pt]
$T = 600$ & \\[2pt]
$\rho_i \sim \mathcal{U}(0.3, 0.5)$& 89.4 & 87.5 & 88.6\\
$\rho_i \sim \mathcal{U}(0.5, 0.8)$& 92.7 & 91.8 & 92.7\\[5pt]
$T = 1000$ & \\[2pt]
$\rho_i \sim \mathcal{U}(0.3, 0.5)$& 93.8 & 93.2 & 93.6\\
$\rho_i \sim \mathcal{U}(0.5, 0.8)$& 96.1 & 95.3 & 96.0\\
\bottomrule
\multicolumn{4}{p{0.65\textwidth}}{Note: Data are simulated from a model with $m = 3$ regimes and we test the null of $\text{H}_0 \colon m = 2$ against $H_1 \colon m = 3$. We simulate using different values of $\mathbf{B}_1$, for $\mathbf{B}_2 = -\mathbf{B}_1$ and $\mathbf{B}_3 = -0.7 \cdot \mathbf{I}_n$, with $\gamma_1= \gamma_2 = 2$, $c_1 = 2$ and $c_2 = 4$. The nominal significance level is set equal to 0.05.}  
	\end{tabularx}
\end{table}

In order to evaluate the regime choice of the procedures introduced in this paper, we also report in Table \ref{table:SelFreqm2} the selection frequencies when the data are simulated, as before, from model \eqref{eq:Simulation} with $m = 2$, with the same characteristic presented above. The procedures start with a linearity test against a two-regime model, then foresee testing $m=2$ vs $m=3$ regimes and continue until a non-rejection.

For any sample size and persistence level, the selection frequencies indicate a high tendency to identify two regimes correctly. As expected, increasing the sample size enhances the accuracy of the tests in identifying the correct number of regimes. In fact, for both $\rho_i \sim \mathcal{U}(0.3, 0.5)$ and $\rho_i \sim \mathcal{U}(0.5, 0.8)$, the selection frequencies of $\hat{m} = 2$ are all above 97\% for all the tests when the significance level is $\alpha = 0.01$ and $T=1000$.

\begin{table}[H]
	\centering
	\caption{\textbf{Selection frequencies for a VLSTAR model. The frequency is in per cent based on 1000 replications from model \eqref{eq:Simulation} with $m = 2$ regimes and $n = 3$}}\label{table:SelFreqm2}
	\begin{tabular}{lrrr@{\hskip 0.2cm}rrr@{\hskip 0.2cm}rrr}
		\toprule
		& \multicolumn{3}{c}{$LM_3$} & \multicolumn{3}{c}{$LM_3^{\text{resc}}$} & \multicolumn{3}{c}{Wilks}\\
		\cmidrule(lr){2-4} \cmidrule(lr){5-7} \cmidrule(lr){8-10}\\
		& $\hat{m}=1$  & $\hat{m}=2$ & $\hat{m}\geq3$ & $\hat{m}=1$  & $\hat{m}=2$ & $\hat{m}\geq3$& $\hat{m}=1$  & $\hat{m}=2$ & $\hat{m}\geq3$\\
		\midrule
		
		\multicolumn{3}{l}{$T = 400, \rho_i \sim \mathcal{U}(0.3, 0.5)$} \\[2pt]
$\alpha = 0.10$& 0.0 & \underline{85.9} & 14.1  & 0.0 &\underline{90.1} & 9.9 & 11.7 & \underline{74.8} & 13.5\\
$\alpha = 0.05$& 0.1 & \underline{92.3} & 7.6  & 0.1 & \underline{94.8} & 5.1 & 17.8 & \underline{74.7} & 7.5\\
$\alpha = 0.01$& 0.4 & \underline{98.0} & 1.6  & 0.7 & \underline{98.6} & 0.7 & 34.9 & \underline{63.8} & 1.3\\[5pt]

$T = 400, \rho_i \sim \mathcal{U}(0.5, 0.8)$ \\[2pt]
$\alpha = 0.10$& 0.0 & \underline{84.9} & 15.1 & 0.0  & \underline{88.8} & 11.2 & 0.5 & \underline{84.6} & 14.9\\
$\alpha = 0.05$& 0.0 & \underline{91.4} & 8.6  & 0.0 & \underline{94.1} & 5.9  & 1.1 & \underline{90.1} & 8.8\\
$\alpha = 0.01$& 0.0 & \underline{98.3} & 1.7  & 0.0 & \underline{98.8} & 1.2  & 1.9 & \underline{96.3} & 1.8\\[5pt]	
		
$T = 600, \rho_i \sim \mathcal{U}(0.3, 0.5)$ \\[2pt]
$\alpha = 0.10$& 0.0 & \underline{89.6} & 10.4 & 0.0 & \underline{91.2} & 8.8 & 3.0 & \underline{86.7} & 10.3\\
$\alpha = 0.05$& 0.0 & \underline{93.7} & 6.3  & 0.0 & \underline{95.0} & 5.0 & 4.7 & \underline{89.0} & 6.3\\
$\alpha = 0.01$& 0.0 & \underline{98.4} & 1.6  & 0.0 & \underline{98.9} & 1.1 & 10.1 & \underline{88.3} & 1.6\\[5pt]

$T = 600, \rho_i \sim \mathcal{U}(0.5, 0.8)$ \\[2pt]
$\alpha = 0.10$& 0.0 & \underline{87.6} & 12.4  & 0.0 & \underline{89.8}& 10.2 & 0.1 & \underline{87.6} & 12.3\\
$\alpha = 0.05$& 0.0 & \underline{92.9} & 7.1  & 0.0 & \underline{94.6} & 5.4  & 0.1 & \underline{93.0} & 6.9\\
$\alpha = 0.01$& 0.0 & \underline{97.6} & 2.4  & 0.0 & \underline{98.0} & 2.0  & 0.5 & \underline{97.1} & 2.4\\[5pt]
		
$T = 1000, \rho_i \sim \mathcal{U}(0.3, 0.5)$ \\[2pt]
$\alpha = 0.10$& 0.0 & \underline{90.9} & 9.1 & 0.0 & \underline{92.2} & 7.8 & 0.0 & \underline{91.0} & 9.0\\
$\alpha = 0.05$& 0.0 & \underline{95.1} & 4.9 & 0.0 & \underline{95.6} & 4.4 & 0.0 & \underline{95.1} & 4.9\\
$\alpha = 0.01$& 0.0 & \underline{98.5} & 1.5 & 0.0 & \underline{99.1} & 0.9 & 0.2 & \underline{98.3} & 1.5\\[5pt]

$T = 1000, \rho_i \sim \mathcal{U}(0.5, 0.8)$ \\[2pt]
$\alpha = 0.10$& 0.0 & \underline{91.7} & 8.3  & 0.0 & \underline{92.6} & 7.4  & 0.0 & \underline{91.7} & 8.3\\
$\alpha = 0.05$& 0.0 & \underline{95.9} & 4.1  & 0.0 & \underline{96.4} & 3.6  & 0.0 & \underline{95.8} & 4.2\\
$\alpha = 0.01$& 0.0 & \underline{99.3} & 0.7  & 0.0 & \underline{99.4} & 0.6  & 0.0 & \underline{99.3} & 0.7\\[5pt]
		\bottomrule  
		\multicolumn{10}{p{\textwidth}}{Note: Data are simulated from a model with $m = 2$ regimes. We simulate using different values of $\mathbf{B}_1$ and for $\mathbf{B}_2 = - \mathbf{B}_1$. Underlined values denote the identified number of regimes.}  
	\end{tabular}
\end{table}

We then compute the empirical size of the test and the selection frequencies when data are simulated from a VTAR model with $2$ regimes, specified as follows
\begin{equation}\label{eq:TVAR2}
\mathbf{y}_t = \left(\mathbf{\Phi}_1 \mathbf{y}_{t-1} + \bm{\varepsilon}_{1,t}\right)\mathbbm{1}(s_t<c) + \left(\mathbf{\Phi}_2 \mathbf{y}_{t-1} + \bm{\varepsilon}_{2,t}\right)\mathbbm{1}(s_t \geq c)
\end{equation}
where $s_t$ is generated from an autoregressive model as in Eq. \eqref{eq:st}. We let $\mathbf{\Phi}_1$ vary as $\mathbf{B}_1$ before, we also set $\mathbf{\Phi}_2 = -\mathbf{\Phi}_1$, the threshold for $s_t$ is $c = 2$, while $\bm{\varepsilon}_{1,t}, \bm{\varepsilon}_{2,t} \sim \mathcal{N}(0, \mathbf{I}_n)$.

In Table \ref{table:TVAR}, we evaluate the empirical sizes of the test applied to the data simulated from a VTAR specified as in Eq. \eqref{eq:TVAR2}. With the exception of the rescaled LM, all the tests tend to slightly over-reject the null at any significance level. Nevertheless, the difference between the empirical and the nominal size is negligible when the persistence changes.

When analysing the empirical power in Table \ref{table:Power2} (simulating from a three-regime model with $\bm{\Phi}_3 = -0.7\cdot \mathbf{I}_n$ and $c_2 = 4$), it can be observed that the power of the test is generally high and that improves with larger sample sizes, underlining, again, the importance of sample size in detecting additional regimes. Although the empirical power is above 75\%, Wilks' $\Lambda$ tends to perform worse than the other two specifications. 

Table \ref{table:SelFreqVTARm2} complements these findings by showing the selection frequencies for a VTAR model when the real DGP has $m=2$. The results are consistent with those of the VLSTAR model, confirming that the sequential procedure is equally applicable and reliable for VTAR models. As with the VLSTAR model, the accuracy of the test procedure increases with sample size and significance levels.

\begin{table}[H]
	\centering
	\caption{\textbf{Empirical size of additive nonlinearity test for a VTAR model. The empirical size is in per cent based on 1000 replications from model \eqref{eq:TVAR2} with $m = 2$ regimes and $n = 3$}}\label{table:TVAR}
	\begin{tabular}{lrrr@{\hskip 0.3cm}rrr@{\hskip 0.3cm}rrr}
		\toprule
		& \multicolumn{3}{c}{$LM_3$} & \multicolumn{3}{c}{$LM_3^{\text{resc}}$}   & \multicolumn{3}{c}{Wilks}\\
		\cmidrule(lr){2-4} \cmidrule(lr){5-7}  \cmidrule(lr){8-10} \\
		& 10\%  & 5\%   & 1\%   & 10\%  & 5\%   &    1\%  & 10\%  & 5\%   &    1\%\\
		\midrule

$T = 400$             &      &      &      &      &      &      \\
$\rho_i \sim \mathcal{U}(0.3, 0.5)$& 12.6 & 6.9 & 0.7 & 9.2 & 4.3 & 0.3 & 12.2 & 6.8 & 0.9 \\ 
$\rho_i \sim \mathcal{U}(0.5, 0.8)$& 15.6 & 8.5 & 2.6 & 10.4 & 5.8 & 1.6 & 15.1 & 8.6 & 2.6\\ [5pt]

$T = 600$             &      &      &     &      &      &      \\
$\rho_i \sim \mathcal{U}(0.3, 0.5)$&  10.6 & 5.6 & 0.7 &  8.2 & 3.3 & 0.6 & 10.5 & 5.5 & 0.7\\ 
$\rho_i \sim \mathcal{U}(0.5, 0.8)$&  13.5 & 7.6 & 1.8 & 11.1 & 5.4 & 1.4 & 13.4 & 7.5 & 1.9\\[5pt] 

$T = 1000$             &      &      &     &      &      &      \\
$\rho_i \sim \mathcal{U}(0.3, 0.5)$&  11.2 & 5.9 & 1.6 &  9.4 & 4.9 & 1.4 & 11.1 & 5.9 & 1.6\\ 
$\rho_i \sim \mathcal{U}(0.5, 0.8)$&  13.1 & 6.2 & 1.5 & 10.6 & 5.2 & 1.1 & 13.0 & 6.1 & 1.5\\
		\bottomrule  
		\multicolumn{10}{p{0.7\textwidth}}{Note: Data are simulated from a VTAR model with $m = 2$ regimes and we test the null of $\text{H}_0 \colon m = 2$ against $H_1 \colon m = 3$. We simulate using different values of $\bm{\Phi}_1$ and for $\bm{\Phi}_2 = -\bm{\Phi}_1$.}  
	\end{tabular}
\end{table}

\begin{table}[H]
	\centering
	\caption{\textbf{Power of the additive nonlinearity test at different sample sizes based on 1000 replications from model \eqref{eq:TVAR2} with $m = 3$ regimes and $n = 3$, testing the null $\text{H}_0 \colon m = 2$}}\label{table:Power2}
	\begin{tabularx}{0.67\textwidth}{lYYY}
		\toprule
		& $LM_3$ & $LM_3^{\text{resc}}$  & Wilks\\
		\midrule
		$T = 400$ & \\[2pt]
		$\rho_i \sim \mathcal{U}(0.3, 0.5)$& 93.2 & 91.6 & 76.6\\
		$\rho_i \sim \mathcal{U}(0.5, 0.8)$& 94.6 & 92.6 & 80.3\\[5pt]
		$T = 600$ & \\[2pt]
		$\rho_i \sim \mathcal{U}(0.3, 0.5)$& 94.7 & 94.0 & 80.6\\
		$\rho_i \sim \mathcal{U}(0.5, 0.8)$& 95.7 & 95.3 & 84.2\\[5pt]
		$T = 1000$ & \\[2pt]
		$\rho_i \sim \mathcal{U}(0.3, 0.5)$& 95.5 & 95.3 & 86.7\\
		$\rho_i \sim \mathcal{U}(0.5, 0.8)$& 97.2 & 96.9 & 85.1\\
		\bottomrule
		\multicolumn{4}{p{0.65\textwidth}}{Note: Data are simulated from a model with $m = 3$ regimes and we test the null of $\text{H}_0 \colon m = 2$ against $H_1 \colon m = 3$. We simulate using different values of $\bm{\Phi}_1$, for $\bm{\Phi}_2 = -\bm{\Phi}_1$ and $\bm{\Phi}_3 = -0.7 \cdot \mathbf{I}_n$, with $c_1 = 2$ and $c_2 = 4$. The nominal significance level is set equal to 0.05.}  
	\end{tabularx}
\end{table}

\begin{table}[H]
	\centering
	\caption{\textbf{Selection frequencies for a VTAR model. The frequency is in per cent based on 1000 replications from model \eqref{eq:TVAR2} with $m = 2$ regimes and $n = 3$}}\label{table:SelFreqVTARm2}
	\begin{tabular}{lrrr@{\hskip 0.2cm}rrr@{\hskip 0.2cm}rrr}
		\toprule
		& \multicolumn{3}{c}{$LM_3$} & \multicolumn{3}{c}{$LM_3^{\text{resc}}$} & \multicolumn{3}{c}{Wilks}\\
		\cmidrule(lr){2-4} \cmidrule(lr){5-7} \cmidrule(lr){8-10}\\
		& $\hat{m}=1$  & $\hat{m}=2$ & $\hat{m}\geq3$ & $\hat{m}=1$  & $\hat{m}=2$ & $\hat{m}\geq3$& $\hat{m}=1$  & $\hat{m}=2$ & $\hat{m}\geq3$\\
		\midrule
		
		\multicolumn{3}{l}{$T = 400, \rho_i \sim \mathcal{U}(0.3, 0.5)$} \\[2pt]
$\alpha = 0.10$& 0.0 & \underline{87.4} & 12.6 & 0.0 & \underline{90.8} & 9.2 & 0.0 & \underline{87.8} & 12.2\\
$\alpha = 0.05$& 0.0 & \underline{93.1} & 6.9  & 0.0 & \underline{95.7} & 4.3 & 0.0 & \underline{93.2} & 6.8\\
$\alpha = 0.01$& 0.0 & \underline{99.3} & 0.7  & 0.0 & \underline{99.7} & 0.3 & 0.0 & \underline{99.1} & 0.9\\[5pt]

$T = 400, \rho_i \sim \mathcal{U}( 0.8)$ \\[2pt]
$\alpha = 0.10$& 0.0 & \underline{84.4} & 15.6 & 0.0 & \underline{89.6} & 10.4 & 0.0 & \underline{84.9} & 15.1\\
$\alpha = 0.05$& 0.0 & \underline{91.5} & 8.5  & 0.0 & \underline{94.2} & 5.8  & 0.0 & \underline{91.4} & 8.6\\
$\alpha = 0.01$& 0.0 & \underline{97.4} & 2.6  & 0.0 & \underline{98.4} & 1.6  & 0.0 & \underline{97.4} & 2.6\\[5pt]	
		
$T = 600, \rho_i \sim \mathcal{U}(0.3, 0.5)$ \\[2pt]
$\alpha = 0.10$& 0.0 & \underline{89.4} & 10.6 & 0.0 & \underline{91.8} & 8.2 & 0.0 & \underline{89.5} & 10.5\\
$\alpha = 0.05$& 0.0 & \underline{94.4} & 5.6  & 0.0 & \underline{96.7} & 3.3 & 0.0 & \underline{94.5} & 5.5\\
$\alpha = 0.01$& 0.0 & \underline{99.3} & 0.7  & 0.0 & \underline{99.4} & 0.6 & 0.1 & \underline{99.2} & 0.7\\[5pt]

$T = 600, \rho_i \sim \mathcal{U}(0.5, 0.8)$ \\[2pt]
$\alpha = 0.10$& 0.0 & \underline{86.5} & 13.5 & 0.0 & \underline{88.9} & 11.1 & 0.0 & \underline{86.6} & 13.4\\
$\alpha = 0.05$& 0.0 & \underline{92.4} & 7.6  & 0.0 & \underline{94.6} & 5.4  & 0.0 & \underline{92.5} & 7.5\\
$\alpha = 0.01$& 0.0 & \underline{98.2} & 1.8  & 0.0 & \underline{98.6} & 1.4  & 0.0 & \underline{98.1} & 1.9\\[5pt]
		
$T = 1000, \rho_i \sim \mathcal{U}(0.3, 0.5)$ \\[2pt]
$\alpha = 0.10$& 0.0 & \underline{88.8} &11.2 & 0.0 & \underline{90.6} & 9.4 & 0.0 & \underline{88.9} & 11.1\\
$\alpha = 0.05$& 0.0 & \underline{94.1} & 5.9 & 0.0 & \underline{95.1} & 4.9 & 0.0 & \underline{94.1} & 5.9\\
$\alpha = 0.01$& 0.0 & \underline{98.4} & 1.6 & 0.0 & \underline{98.6} & 1.4 & 0.0 & \underline{98.4} & 1.6\\[5pt]

$T = 1000, \rho_i \sim \mathcal{U}(0.5, 0.8)$ \\[2pt]
$\alpha = 0.10$& 0.0 & \underline{86.9} & 13.1 & 0.0 & \underline{89.4} & 10.6 & 0.0 & \underline{87.0} & 13.0\\
$\alpha = 0.05$& 0.0 & \underline{93.8} & 6.2  & 0.0 & \underline{94.8} & 5.2 & 0.0 & \underline{93.9} & 6.1\\
$\alpha = 0.01$& 0.0 & \underline{98.5} & 1.5  & 0.0 & \underline{98.9} & 1.1 & 0.0 & \underline{98.5} & 1.5\\[5pt]
\bottomrule  
\multicolumn{10}{p{\textwidth}}{Note: Data are simulated from a model with $m = 2$ regimes. We simulate using different values of $\bm{\Phi}_1$ and for $\bm{\Phi}_2 = -\bm{\Phi}_1$. Underlined values denote the identified number of regimes.}  
	\end{tabular}
\end{table}

As a robustness check, we also observe what happens when the number of dependent variables increases (we use $n=5$). We simulate only with $\rho_i \sim \mathcal{U}(0.3, 0.5)$, because the DGPs are not stationary under $\rho_i \sim \mathcal{U}(0.5, 0.8)$ \cite[stationarity was assessed through the method proposed in][]{Kheifets2020}.

Table \ref{table:Empiricalm2n5} assesses the empirical size of the additive nonlinearity test in models with $m=2$ regimes and $n=5$ variables. The results slightly diverge from what observed with $n=3$, since the test statistics generally exhibit sizes lower to the nominal levels for the VLSTAR model and higher for the VTAR model. Nevertheless, the empirical sizes for both VLSTAR and VTAR models are overall not too far from the expected values.

The findings are more encouraging when one analyses the empirical powers in Table \ref{table:Powern5}. The power of all three tests is always higher than 0.90 and improves with larger sample sizes. For instance, with $T=1000$, the empirical power is nearly perfect, reflecting the tests' ability to correctly identify additional regimes in large samples.

As for the case of $n=3$, when the number of dependent variables is equal to 5, the procedure is capable of correctly identifying the real number of regimes for any model, sample size and level of persistence. In fact, the number of selected regimes is always equal to 2, with percentage values ranging from 76.7 to 99.8.

\begin{table}[H]
	\centering
	\caption{\textbf{Empirical size of additive nonlinearity test. The empirical size is in per cent based on 1000 replications from a model with $m = 2$ regimes and $n = 5$}}\label{table:Empiricalm2n5}
	\begin{tabular}{lrrr@{\hskip 0.3cm}rrr@{\hskip 0.3cm}rrr}
		\toprule
		& \multicolumn{3}{c}{$LM_3$} & \multicolumn{3}{c}{$LM_3^{\text{resc}}$} & \multicolumn{3}{c}{Wilks}\\
		\cmidrule(lr){2-4} \cmidrule(lr){5-7} \cmidrule(lr){8-10}\\
		& 10\%  & 5\% & 1\% & 10\%& 5\% &    1\%& 10\%     & 5\%    &    1\%\\
		\midrule
&\multicolumn{9}{c}{Panel A: VLSTAR}\\
		
		$T = 400$                                    &      &      &      &      &      &      \\[2pt]
		$\rho_i \sim \mathcal{U}(0.3, 0.5)$&  7.2 & 3.7 & 1.4 & 3.3 & 1.6 & 0.2 & 7.0 & 3.7 & 1.4\\[5pt]

		$T = 600$                                     &      &      &      &      &       &      \\[2pt]
		$\rho_i \sim \mathcal{U}(0.3, 0.5)$& 7.8 & 5.0 & 1.8 & 5.0 & 3.0 & 1.4 & 7.6 & 5.0 & 1.6\\[5pt]
		
		$T = 1000$                                     &      &      &      &      &       &      \\[2pt]
		$\rho_i \sim \mathcal{U}(0.3, 0.5)$& 3.2 & 1.4 & 0.4 & 2.4 & 1.0 & 0.2 & 3.2 & 1.2 & 0.4 \\[2pt]
		\hline
&\multicolumn{9}{c}{Panel B: VTAR}\\

$T = 400$                                    &      &      &      &      &      &      \\[2pt]
$\rho_i \sim \mathcal{U}(0.3, 0.5)$&  10.3  & 7.4 & 3.7 & 5.9 & 3.9 & 1.2 & 10.3 & 7.2 & 3.7\\[5pt]

$T = 600$                                     &      &      &      &      &       &      \\[2pt]
$\rho_i \sim \mathcal{U}(0.3, 0.5)$& 16.3 & 9.8 & 3.8 & 16.9 & 9.7 & 2.9 & 16.7 & 10.7 & 2.7\\[5pt]

$T = 1000$                                     &      &      &      &      &       &      \\[2pt]
$\rho_i \sim \mathcal{U}(0.3, 0.5)$& 15.6 & 9.5 & 3.4 & 12.0 & 7.1 & 2.1 & 15.1 & 9.4 & 3.5\\
		\bottomrule  
		\multicolumn{10}{p{0.7\textwidth}}{Note: Data are simulated from a model with $m = 2$ regimes and we test the null of $\text{H}_0 \colon m = 2$ against $H_1 \colon m = 3$. We simulate using different values of $\mathbf{B}_1$ ($\bm{\Phi}_1$) and for $\mathbf{B}_2 = - \mathbf{B}_1$ ($\bm{\Phi}_2 = -\bm{\Phi}_1$ in the VTAR).}  
	\end{tabular}
\end{table}

\begin{table}[H]
	\centering
	\caption{\textbf{Power of the additive nonlinearity test at different sample sizes based on 1000 replications from a model with $m = 3$ regimes and $n = 5$, testing the null $\text{H}_0 \colon m = 2$}}\label{table:Powern5}
	\begin{tabularx}{0.67\textwidth}{lYYY}
		\toprule
		& $LM_3$ & $LM_3^{\text{resc}}$  & Wilks\\
		\midrule
		&\multicolumn{3}{c}{Panel A: VLSTAR}\\
		$T = 400$ & \\[2pt]
		$\rho_i \sim \mathcal{U}(0.3, 0.5)$& 95.2 & 91.6 & 94.0\\[5pt]
		$T = 600$ & \\[2pt]
		$\rho_i \sim \mathcal{U}(0.3, 0.5)$& 92.4 & 90.8 & 92.2\\[5pt]
		$T = 1000$ & \\[2pt]
		$\rho_i \sim \mathcal{U}(0.3, 0.5)$& 99.2 & 98.6 & 99.2\\[2pt]
\hline
		&\multicolumn{3}{c}{Panel B: VTAR}\\
		$T = 400$ & \\[2pt]
		$\rho_i \sim \mathcal{U}(0.3, 0.5)$& 93.2 & 91.6 & 76.6\\[5pt]
		$T = 600$ & \\[2pt]
		$\rho_i \sim \mathcal{U}(0.3, 0.5)$& 98.9 & 98.3 & 99.8\\[5pt]
		$T = 1000$ & \\[2pt]
		$\rho_i \sim \mathcal{U}(0.3, 0.5)$& 99.9 & 100.0 & 100.0\\
		\bottomrule
		\multicolumn{4}{p{0.65\textwidth}}{Note: Data are simulated from a model with $m = 3$ regimes and we test the null of $\text{H}_0 \colon m = 2$ against $H_1 \colon m = 3$. We simulate using different values of $\mathbf{B}_1$ ($\bm{\Phi}_1$), for $\mathbf{B}_2 = -\mathbf{B}_1$ ($\bm{\Phi}_2 = -\bm{\Phi}_1$ in the VTAR) and $\mathbf{B}_3 = -0.7 \cdot \mathbf{I}_n$ ($\bm{\Phi}_3 = -0.7 \cdot \mathbf{I}_n$ in the VTAR), with $c_1 = 2$ and $c_2 = 4$ in both the models, and $\gamma_1 = \gamma_2 = 2$ in the VLSTAR model. The nominal significance level is set equal to 0.05.}  
	\end{tabularx}
\end{table}

\begin{table}[H]
	\centering
	\caption{\textbf{Selection frequencies. The frequency is in per cent based on 1000 replications from a model with $m = 2$ regimes and $n = 3$}}\label{table:SelFreqm2n5}
	\begin{tabular}{lrrr@{\hskip 0.2cm}rrr@{\hskip 0.2cm}rrr}
		\toprule
		& \multicolumn{3}{c}{$LM_3$} & \multicolumn{3}{c}{$LM_3^{\text{resc}}$} & \multicolumn{3}{c}{Wilks}\\
		\cmidrule(lr){2-4} \cmidrule(lr){5-7} \cmidrule(lr){8-10}\\
		& $\hat{m}=1$  & $\hat{m}=2$ & $\hat{m}\geq3$ & $\hat{m}=1$  & $\hat{m}=2$ & $\hat{m}\geq3$& $\hat{m}=1$  & $\hat{m}=2$ & $\hat{m}\geq3$\\
		\midrule
		&\multicolumn{9}{c}{Panel A: VLSTAR}\\
		\multicolumn{3}{l}{$T = 400, \rho_i \sim \mathcal{U}(0.3, 0.5)$} \\[2pt]
		$\alpha = 0.10$& 0.0 & \underline{92.8} & 7.2  & 0.0 & \underline{96.7} & 3.3 & 3.7  & \underline{89.3} & 7.0\\
		$\alpha = 0.05$& 0.0 & \underline{96.3} & 3.7  & 0.1 & \underline{98.4} & 1.6 & 5.7  & \underline{90.8} & 3.5\\
		$\alpha = 0.01$& 0.0 & \underline{98.6} & 1.4  & 0.7 & \underline{99.8} & 0.2 & 12.9 & \underline{85.9} & 1.2\\[5pt]
		
		$T = 600, \rho_i \sim \mathcal{U}(0.3, 0.5)$ \\[2pt]
		$\alpha = 0.10$& 0.0 & \underline{92.2} & 7.8  & 0.0 & \underline{95.0} & 5.0 & 0.6  & \underline{91.8} & 7.6\\
		$\alpha = 0.05$& 0.0 & \underline{95.0} & 5.0  & 0.0 & \underline{97.0} & 3.0 & 0.8  & \underline{94.2} & 5.0\\
		$\alpha = 0.01$& 0.0 & \underline{98.2} & 1.8  & 0.0 & \underline{98.6} & 1.4 & 2.2  & \underline{96.2} & 1.6\\[5pt]
		
		$T = 1000, \rho_i \sim \mathcal{U}(0.3, 0.5)$ \\[2pt]
		$\alpha = 0.10$& 0.0 & \underline{96.8} & 3.2 & 0.0 & \underline{97.6} & 2.4 & 0.0 & \underline{96.8} & 3.2\\
		$\alpha = 0.05$& 0.0 & \underline{98.6} & 1.4 & 0.0 & \underline{99.1} & 1.0 & 0.0 & \underline{98.8} & 1.2\\
		$\alpha = 0.01$& 0.0 & \underline{99.6} & 0.4 & 0.0 & \underline{99.8} & 0.2 & 0.2 & \underline{99.6} & 0.4\\[5pt]
		\hline
&\multicolumn{9}{c}{Panel B: VTAR}\\
\multicolumn{3}{l}{$T = 400, \rho_i \sim \mathcal{U}(0.3, 0.5)$} \\[2pt]
$\alpha = 0.10$& 0.0 & \underline{89.7} & 10.3 & 0.0 & \underline{94.1} & 5.9 & 0.0 & \underline{89.7} & 10.3\\
$\alpha = 0.05$& 0.0 & \underline{92.6} & 7.4  & 0.0 & \underline{96.1} & 3.9 & 0.0 & \underline{92.8} & 7.2\\
$\alpha = 0.01$& 0.0 & \underline{96.3} & 3.7  & 0.0 & \underline{98.8} & 1.2 & 0.0 & \underline{96.3} & 3.7\\[5pt]

$T = 600, \rho_i \sim \mathcal{U}(0.3, 0.5)$ \\[2pt]
$\alpha = 0.10$& 0.0 & \underline{76.7} & 23.3  & 0.0 & \underline{83.1} & 16.9 & 0.0  & \underline{77.3} & 22.7\\
$\alpha = 0.05$& 0.0 & \underline{83.2} & 16.8  & 0.0 & \underline{88.3} & 11.7 & 0.0  & \underline{83.3} & 16.7\\
$\alpha = 0.01$& 0.0 & \underline{92.2} & 7.8   & 0.0 & \underline{94.7} & 5.3  & 0.0  & \underline{92.3} & 7.7\\[5pt]

$T = 1000, \rho_i \sim \mathcal{U}(0.3, 0.5)$ \\[2pt]
$\alpha = 0.10$& 0.0 & \underline{84.4} &15.6 & 0.0 & \underline{88.0} &12.0 & 0.0 & \underline{84.9} & 15.1\\
$\alpha = 0.05$& 0.0 & \underline{90.5} & 9.5 & 0.0 & \underline{92.9} & 7.1 & 0.0 & \underline{90.6} & 9.4\\
$\alpha = 0.01$& 0.0 & \underline{96.6} & 3.4 & 0.0 & \underline{97.9} & 2.1 & 0.2 & \underline{96.5} & 3.5\\[5pt]
		\bottomrule  
\multicolumn{10}{p{\textwidth}}{Note: Data are simulated from a model with $m = 2$ regimes. We simulate using different values of $\mathbf{B}_1$ ($\bm{\Phi}_1$) and for $\mathbf{B}_2 = - \mathbf{B}_1$ ($\bm{\Phi}_2 = -\bm{\Phi}_1$ in the VTAR). Underlined values denote the identified number of regimes.}  
\end{tabular}
\end{table}

\section{Empirical applications}\label{sec:Empirical}

\subsection{Interest rate term structure}
We first apply the sequential test procedure to the U.S. monthly interest rates data already used in \cite{Tsay1998}. The dataset contains the 3-month treasury bill rates ($Y_{1,t}$) and 3-year treasury notes ($Y_{2,t}$) for the period from June 1953 to September 2022 ($T = 832$). These represent the short-term and intermediate-term series in the term structure of the interest rates. To obtain weakly stationary time series, the data have been considered as growth rates, \textit{i.e.}, $\mathbf{y}_t = \left(y_{1,t}, y_{2,t}\right)'$, with $y_{i,t} = \ln(Y_{i,t}) - \ln(Y_{i,t-1})$ for $i = 1,2$. The plots of the time series are shown in Fig. \ref{fig:Tsay}.

As a candidate transition variable, we select the maturity spread computed as $x_t = \ln(Y_{1,t}) - \ln(Y_{2,t})$ since, according to the inverted yield curve theory \citep{Harvey1988}, this would reflect the business cycle of the U.S. economy. Following \cite{Tsay1998}, to avoid random fluctuations in the interest rates term structure, we use the 3-month moving average of $x_t$ as a transition variable, therefore $s_t = (x_t + x_{t-1} + x_{t-2})/3$ (see the green line in Fig. \ref{fig:Tsay}). To select the VLSTAR lag length, we use AIC and BIC criteria which suggest a single lag specification, \textit{i.e.}, $p = 1$. To understand if the results of our methodology are similar to the ones obtained with an alternative method, we compare them with the equation-by-equation approach proposed by \cite{camacho04}.

The results of the tests are reported in Table \ref{table:Interestrates}. It can be noticed in the first column that all the tests strongly reject the null hypothesis of linearity at any significance level, except for the test implemented in \cite{camacho04} which rejects at 5 and 10\%. This means that, in line with what was supposed and proved by \cite{Tsay1998}, a nonlinearity is present in the dynamics of the interest rates and the top-down procedure for the selection of the number of regimes presented in this paper can be applied. The procedure foresees testing the null hypothesis of $m = 2$ regimes against the alternative of $m = 3$. Once again, the null hypothesis is rejected in all the test statistics (second column of the table) introduced in this article, while the null cannot be rejected with the equation-by-equation test of \cite{camacho04}. According to the results from the system-based test statistics, a 2-regime model is not enough to consider all the nonlinearity in the model. The third column of the table reports the test statistics for the null of $m = 3$ regimes. None of the tests rejects the null hypothesis. Consequently, it can be deduced that the optimal number of regimes for these time series is three, which is also what was originally supposed in \cite{Tsay1998}. 

\begin{table}[H]
	\centering
	\caption{\textbf{Tests for linearity and additive nonlinearity in the empirical application with interest rates}}\label{table:Interestrates}
\begin{tabularx}{0.75\textwidth}{lYYY}
		\toprule
		 &  $H_0\colon m = 1$ &  $H_0\colon m = 2$ & $H_0\colon m = 3$\\[2pt]
		 &   Test statistic ($p$-value) & Test statistic ($p$-value) & Test statistic ($p$-value)\\
		\midrule
$LM_3$             &$\underset{(1.11e-15)}{98.399}$ & $\underset{(2.08e-18)}{107.819}$& $\underset{(0.770)}{8.185}$\\
$LM_3^{\text{resc}}$&$\underset{(4.89e-15)}{8.141}$  & $\underset{(1.11e-16)}{8.920}$  & $\underset{(0.775)}{0.677}$\\
Wilks              &$\underset{(3.33e-16)}{101.087}$& $\underset{( 3.54e-09)}{64.381}$  & $\underset{(0.875)}{6.766}$\\
Camacho            & $\underset{(0.011)}{25.805}$                              &    $\underset{(0.966)}{0.265}$                   & - \\
		\bottomrule  
		\multicolumn{4}{p{0.75\textwidth}}{Note: In the first column the linearity test is reported, while columns two and three belong to the sequential procedure for the identification of the number of regimes. $p$-values are reported between parentheses.}  
	\end{tabularx}
\end{table}

\begin{figure}[H]
	\centering
	\caption{Time series plots of U.S. interest rates.}\label{fig:Tsay}
	\includegraphics[width=0.8\linewidth]{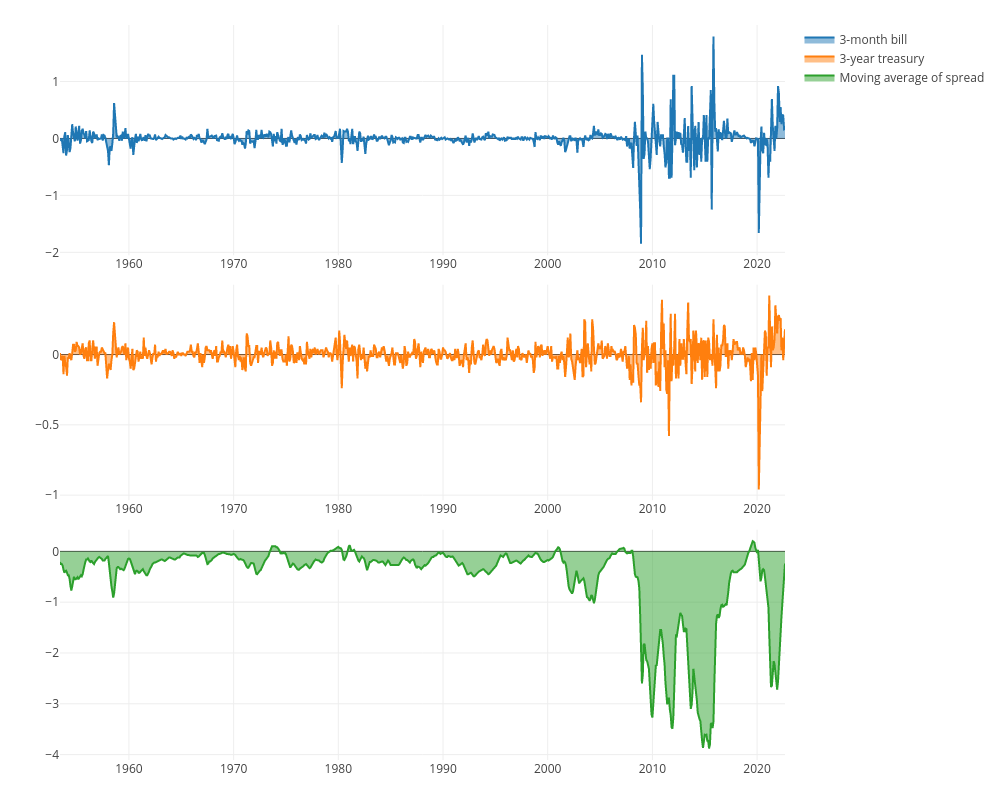} 
	\caption*{Note: The time series of the monthly 3-bill rate growth (blue line) and the 3-year Treasury rate growth (orange line) highlight the possible existence of several regimes. The 3-month moving average of the maturity spread (green line) is used as a transition variable.} 
\end{figure}

\subsection{River flows data}

As a second empirical example, the linearity and no remaining nonlinearity tests have been applied to the daily Icelandic river flow data for the period from 1972 to 1974. The time series in this dataset include river flows in cubic meters per second for two rivers, the J\"{o}kuls\'{a} and the Vatndals\'{a}, as well as the temperature and the precipitation, see Figure \ref{fig:Rivers}. River flow data has been shown to be nonlinear in several former applications. For instance, \cite{Tong1985} use a univariate threshold model to estimate their relationship with temperature and precipitation, while \cite{Tsay1998}, \cite{teya14}, and \cite{LivingstonJr2020} apply a multivariate nonlinear model. 

\begin{figure}[H]
	\centering
	\caption{Time series plots of Icelandic river flow data.}\label{fig:Rivers}
	\includegraphics[width=0.8\linewidth]{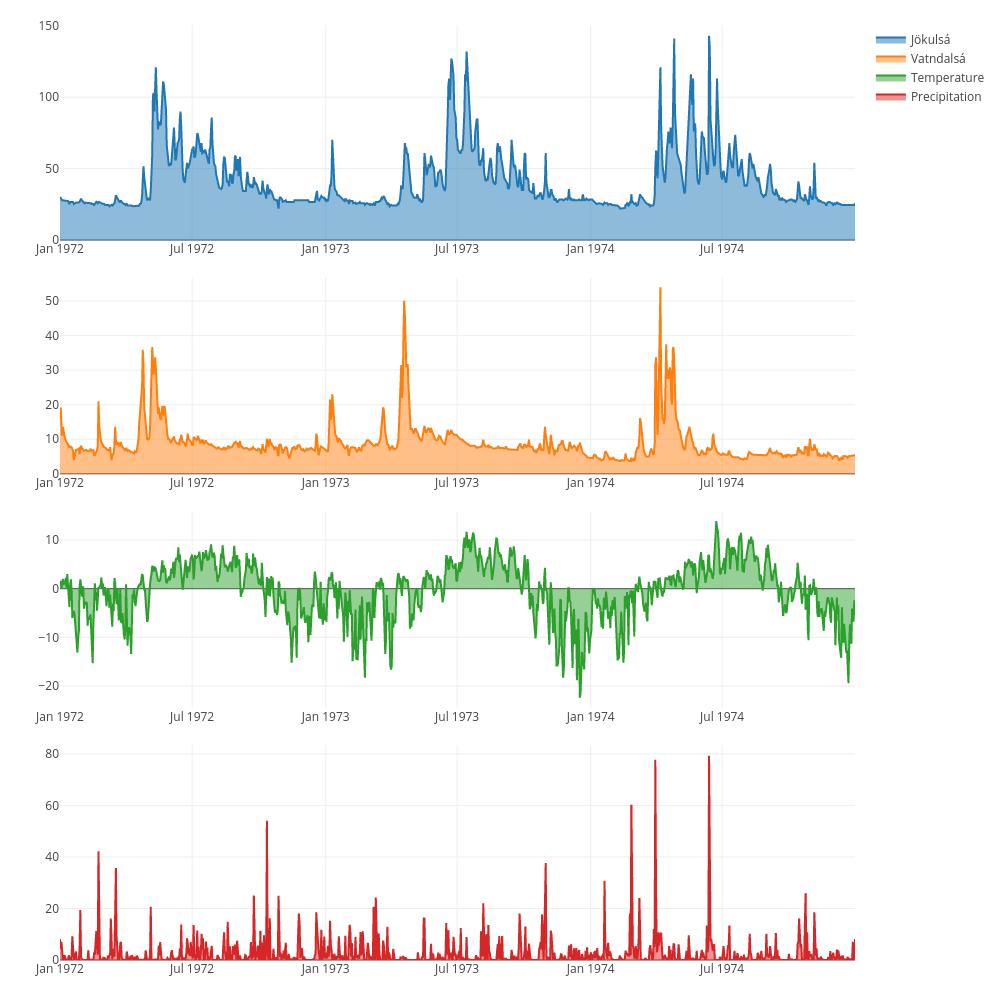} 
	\caption*{Note: The time series of the J\"{o}kuls\'{a} (blue line) and the Vatndals\'{a} (orange line) river flows highlight a nonlinear path for both rivers. The temperature (green line) and the precipitation level (red line) can be used as a transition variable.} 
\end{figure}

Following \cite{Tsay1998} and \cite{teya14}, we first select the lagged temperature as the transition variable for both flow equations (Panel A of Table \ref{table:Riverflows}) and we compute both the linearity test, and the sequential procedure to identify the number of regimes in these series. According to the linearity test statistics, the null hypothesis of linearity is strongly rejected. The procedure introduced in this paper foresees to use the additive nonlinearity test on an increasing number of regimes when the null of linearity is rejected. Therefore, we perform the additive nonlinearity test when the null hypothesis is $\text{H}_0 \colon  m = 2$. While the test by \cite{camacho04} does not reject the null hypothesis, the null of $m = 2$ is rejected at some significance level (\textit{i.e.}, $\alpha = 0.10, 0.05, 0.01$) in all the other tests, meaning that a residual nonlinearity is still present. The procedure iterates until a non-rejection is obtained at all the significance levels. We iterate again the procedure by testing the null hypothesis of $\text{H}_0 \colon m = 3$ regimes. In this case, the null hypothesis cannot be rejected for the standard and rescaled LM-type tests at any standard significance level, while Wilks' $\Lambda$ still rejects the null hypothesis if $\alpha = 0.10$. Based on these results, our top-down sequential procedure points to $m=3$ regimes as the optimal number of regimes for the Icelandic rivers' time series, which is in line with what found in \cite{teya14}.

We also apply the tests using the lagged precipitation as a transition variable (Panel B of Table \ref{table:Riverflows}). As for the case of the lagged temperature as a transition variable, the null hypothesis of linearity is rejected for all the tests. Once rejected the hypothesis of a linear model, we conduct the top-down procedure for the selection of the number of regimes. In this case, the additive nonlinearity tests are not able to reject the null hypothesis of $m = 2$ regimes. This means that, with the lagged precipitation as a transition variable, the optimal model is a 2-regime model.

\begin{table}[H]
	\centering
	\caption{\textbf{Tests for linearity and additive nonlinearity in the empirical application with river flows}}\label{table:Riverflows}
	\begin{tabularx}{0.75\textwidth}{lYYY}
		\toprule
		& \multicolumn{3}{c}{Panel A: Lagged temperature as the transition variable}\\
		\cmidrule{2-4}
		&  $H_0\colon m = 1$ &  $H_0\colon m = 2$ & $H_0\colon m = 3$\\[2pt]
		&   Test statistic ($p$-value) & Test statistic ($p$-value) & Test statistic ($p$-value)\\
		\midrule
		$LM_3$             &$\underset{(2.22e-15)}{89.301}$ & $\underset{(0.021)}{23.939}$& $\underset{(0.143)}{17.174}$\\
		$LM_3^{\text{resc}}$&$\underset{(2.31e-15)}{21.098}$  & $\underset{(0.022)}{1.984}$  & $\underset{(0.148)}{1.423}$\\
		Wilks              &$\underset{(4.44e-16)}{100.663}$& $\underset{(0.004)}{29.333}$  & $\underset{(0.065)}{20.095}$\\
		Camacho            &$\underset{(1.88e-10)}{71.231}$&$\underset{(0.289)}{14.184}$ & - \\

				\toprule
		& \multicolumn{3}{c}{Panel B: Lagged precipitation as the transition variable}\\
		\cmidrule{2-4}
		&  $H_0\colon m = 1$ &  $H_0\colon m = 2$ & $H_0\colon m = 3$\\[2pt]
		&   Test statistic ($p$-value) & Test statistic ($p$-value) & Test statistic ($p$-value)\\
		\midrule
		$LM_3$             &$\underset{(4.14e-14)}{90.393}$ & $\underset{(0.719)}{8.810}$& -\\
		$LM_3^{\text{resc}}$&$\underset{(1.06e-13)}{7.491}$ & $\underset{(0.722)}{0.730}$& -\\
		Wilks              & $\underset{(2.21e-14)}{91.807}$ & $\underset{(0.670)}{9.377}$& -\\
		Camacho            & $\underset{(0.007)}{27.222}$& $\underset{(0.478)}{11.606}$& -\\
		\bottomrule  
		\multicolumn{4}{p{0.75\textwidth}}{Note: In the first column the linearity test is reported, while columns two and three belong to the sequential procedure for the identification of the number of regimes. $p$-values are reported between parentheses.}  
	\end{tabularx}
\end{table}

\section{Conclusions}\label{sec:Conclusions}

In this paper, we developed a simple method for selecting the number of regimes in multivariate nonlinear models with no restrictions on the number of dependent variables and transitions, also giving a more formal context for the linearity and no additional nonlinearity tests introduced in \cite{yate14}.


The results on small-sample properties of the tests are of interest because they highlight that the empirical sizes are affected by the dimension of the model, the size of the sample and the persistence of the time series. We find that the standard LM tests tend to be size-distorted when the time series are almost non-stationary. We also show that Wilks' $\Lambda$ statistic has satisfying size properties, and is recommended for empirical use. Nevertheless, the size of the LM test can be adjusted using a proper bootstrapped version, although this has not been addressed in this work. Not surprisingly, the power experiments demonstrate that the joint test is more powerful in finite samples when the number of temporal observations is large. The selection frequencies of the number of regimes reflect what is already observed through empirical sizes and powers. The sequential procedure is capable of correctly identifying the number of regimes for any sample size, and for both smooth and abrupt regime changes. Finally, our simulation study underlines that the sequential procedure introduced in this paper can be applied either to detect the number of regimes in smoothly changing time series and abrupt regime-changing time series. 

When we apply the sequential test procedure to real data, we can observe that the tests introduced in this paper lead to more rejections with respect to the test introduced by \cite{camacho04}. On the one hand, our approach foresees that the VLSTAR model is correctly specified, and that deviations from linearity are due to remaining nonlinearity. In contrast, the test proposed by \cite{camacho04} may be less sensitive to misspecification, since each equation is tested separately and may have different sources of nonlinearity. On the other hand, more rejections may indicate that the tests on the whole system have greater power to detect nonlinearity, since they account for the joint behaviour of all the equations in the system. As a result, the system-based test may be more likely to identify nonlinear relationships that are present across multiple equations in the system.

A possible implementation for future research could foresee the use of the procedure to detect the number of structural breaks in a multivariate linear model. In fact, if the transition variable is a temporal trend, the VLSTAR model becomes a time-varying parameter model and the changes in regimes coincide with smooth structural breaks.
 
\section*{Acknowledgements}
I offer my sincerest gratitude to Timo T\"{e}rasvirta who has contributed to the early version of this paper, and whose comments have helped me improve the theoretical background behind the sequential procedure. I retain the responsibility for any errors and shortcomings in this work.

\newpage

\bibliographystyle{apalike}
\bibliography{Sequential}

\pagebreak

\appendix

\section{The score vector}\label{sec:Score}

Let's assume a 3-regime model with a single lag
\begin{equation}\label{eq:VLSTAR3regimesa}
	\mathbf{y}_t = \bm{\mu}_0 + \mathbf{\Phi}_0 \mathbf{y}_{t-1} + \mathbf{G}_t^{(1)}\left(\bm{\mu}_1 + \mathbf{\Phi}_1\mathbf{y}_{t-1}\right)+ \mathbf{G}_t^{(2)}\left(\bm{\mu}_2 + \mathbf{\Phi}_2\mathbf{y}_{t-1}\right) + \bm{\varepsilon}_t
\end{equation}
which can be written as
\begin{equation}\label{eq:VLSTAR3regimesb}
	\mathbf{y}_t = \mathbf{\Psi}_t'\mathbf{B}' \mathbf{x}_t + \mathbf{G}_t^{(2)}\mathbf{B}_2\mathbf{x}_t + \bm{\varepsilon}_t
\end{equation}
where $\mathbf{\Psi} = \left(\mathbf{I}_n, \mathbf{G}_t^{(1)}\right)'$ is a $2n \times n$ matrix, $\mathbf{B} = \left(\mathbf{B}_0, \mathbf{B}_1\right)$ is a $(1+n) \times 2n$ matrix of parameters, $\mathbf{x}_t = \left[1, \mathbf{y}_{t-1}'\right]'$ is a $(n+1) \times 1$ vector, and $\mathbf{B}_j = \left(\bm{\mu}_j', \mathbf{\Phi}_j'\right)'$, for $j = 0, 1, 2$. 

To derive the score and the relevant derivatives, we first assume a single transition model, \textit{i.e.} excluding $\mathbf{G}_t^{(2)}\left(\bm{\mu}_2 + \mathbf{\Phi}_2\mathbf{y}_{t-1}\right)$ from model \eqref{eq:VLSTAR3regimesa}. The score of the log-likelihood for model \eqref{eq:VLSTAR3regimesa} is
\begin{equation*}
	\mathbf{s}_t(\bm{\theta}) = \frac{\partial \ell_t(\bm{\theta})}{\partial \bm{\theta}} = -\frac{\partial \bm{\varepsilon}_t'}{\partial \bm{\theta}}\mathbf{\Omega}^{-1}\bm{\varepsilon}_t.
\end{equation*}
Now, we have to compute $\partial \bm{\varepsilon}_t'/\partial \bm{\theta}$, where
\begin{equation*}
	\bm{\varepsilon}_t = \mathbf{y}_t - \bm{\mu}_0 - \mathbf{\Phi}_0 \mathbf{y}_{t-1} - \mathbf{G}_t(\bm{\mu}_1 + \mathbf{\Phi}_1 \mathbf{y}_{t-1}).
\end{equation*}
The $j$-th equation (out of $n$) is
\begin{align*}
	\bm{e}_j\bm{\varepsilon}_t &= \bm{e}_j'\left(\mathbf{y}_t - \bm{\mu}_0 - \mathbf{\Phi}_0\mathbf{y}_{t-1} - \mathbf{G}_t(\bm{\mu}_1 + \mathbf{\Phi}_1 \mathbf{y}_{t-1})\right)\\
	&= \bm{e}_j'\left(\mathbf{y}_t - \bm{\mu}_0 - \sum_{i=1}^{n}\phi_{0i}y_{i,t-1} - \mathbf{G}_t(\bm{\mu}_1 + \sum_{i=1}^{n}\phi_{1i}y_{i,t-1})\right)
\end{align*}
where $\mathbf{e}_j$ is the $j$-th column of the $n \times n$ identity matrix. Denote the parameters of the $i$-th equation as $\bm{\theta}_i$. Then, $\partial \mathbf{e}_j'\bm{\varepsilon}_t/\partial \bm{\theta}_i = \mathbf{0}$ is the $2n(n+1)$-dimensional null vector. For $i = j$, one obtains, component by component,
\begin{equation*}
	\frac{\partial \mathbf{e}_j'\bm{\varepsilon}_t}{\partial \mu_{0j}} = 1
\end{equation*}
so $\partial \bm{\varepsilon}_t/\partial \mu_{0j} = \mathbf{e}_j$ and $\partial \bm{\varepsilon}_t'/\partial \bm{\mu}_0 = \mathbf{I}_n$. Similarly, it holds that $\partial \bm{\varepsilon}_t'/\partial \bm{\mu}_1 = \mathbf{G}_t$. Further, $\partial \bm{\varepsilon}_t/\partial \phi_{0ij} = -\partial\mathbf{\Phi}_0\mathbf{y}_{t-1}/\partial \phi_{0ij} = -\mathbf{e}_i\mathbf{e}_j'\mathbf{y}_{t-1}$, and with a slight abuse of notation it follows that
\begin{align}
	\frac{\partial \bm{\varepsilon}_t'}{\partial \mathbf{\Phi}_0} &= - \begin{bmatrix}
		\mathbf{e}_1\mathbf{e}_1' & \dots & \mathbf{e}_1\mathbf{e}_n'\\
		\vdots & \vdots & \vdots\\
		\mathbf{e}_n\mathbf{e}_1' & \dots & \mathbf{e}_n\mathbf{e}_n'
	\end{bmatrix} \otimes \mathbf{y}_{t-1} \nonumber\\
	&= - \begin{bmatrix}
		y_{1,t-1}\mathbf{e}_1\mathbf{e}_1' & \dots & y_{1,t-1}\mathbf{e}_1\mathbf{e}_n'\\
		\vdots & \vdots & \vdots\\
		y_{1,t-1}\mathbf{e}_n\mathbf{e}_1' & \dots & y_{1,t-1}\mathbf{e}_n\mathbf{e}_n'\\	
		\vdots & \vdots & \vdots\\
		y_{n,t-1}\mathbf{e}_1\mathbf{e}_1' & \dots & y_{1,t-1}\mathbf{e}_1\mathbf{e}_n'\\
		\vdots & \vdots & \vdots\\
		y_{n,t-1}\mathbf{e}_n\mathbf{e}_1' & \dots & y_{n,t-1}\mathbf{e}_n\mathbf{e}_n' \label{eq:derivphi0}		
	\end{bmatrix}
\end{align}
and
\begin{align}
	\frac{\partial \bm{\varepsilon}_t'}{\partial \mathbf{\Phi}_1} &= - \begin{bmatrix}
		\mathbf{G}_t\mathbf{e}_1\mathbf{e}_1' & \dots & \mathbf{G}_t\mathbf{e}_1\mathbf{e}_n'\\
		\vdots & \vdots & \vdots\\
		\mathbf{G}_t\mathbf{e}_n\mathbf{e}_1' & \dots & \mathbf{G}_t\mathbf{e}_n\mathbf{e}_n'
	\end{bmatrix} \otimes \mathbf{y}_{t-1} \nonumber\\
	&= - \begin{bmatrix}
		g_{1t}y_{1,t-1}\mathbf{e}_1\mathbf{e}_1' & \dots & g_{1t}y_{1,t-1}\mathbf{e}_1\mathbf{e}_n'\\
		\vdots & \vdots & \vdots\\
		g_{nt}y_{1,t-1}\mathbf{e}_n\mathbf{e}_1' & \dots & g_{nt}y_{1,t-1}\mathbf{e}_n\mathbf{e}_n'\\	
		\vdots & \vdots & \vdots\\
		g_{1t}y_{n,t-1}\mathbf{e}_1\mathbf{e}_1' & \dots & g_{1t}y_{1,t-1}\mathbf{e}_1\mathbf{e}_n'\\
		\vdots & \vdots & \vdots\\
		g_{nt}y_{n,t-1}\mathbf{e}_n\mathbf{e}_1' & \dots & g_{nt}y_{n,t-1}\mathbf{e}_n\mathbf{e}_n'				
	\end{bmatrix}. \label{eq:derivphi1}	
\end{align}
The matrices $\mathbf{e}_1\mathbf{e}_1'$ have only one nonzero element that equals one, so for the score, the corresponding $n^2 \times n$ matrix becomes $\mathbf{y}_{t-1} \otimes \mathbf{I}_n$. Terms for the average score from \eqref{eq:derivphi0} are as a vector
\begin{equation*}
	\mathbf{s}_T(\mathbf{\Phi}_0) = \frac{1}{T}\sum_{t=1}^{T}(\mathbf{y}_{t-1} \otimes \mathbf{I}_n)
\end{equation*}
which is an $n^2 \times 2$ matrix, and from \eqref{eq:derivphi1} it follows that
\begin{equation*}
	\mathbf{s}_T(\mathbf{\Phi}_1) = \frac{1}{T}\sum_{t=1}^{T}(\mathbf{G}_t\mathbf{y}_{t-1} \otimes \mathbf{I}_n).
\end{equation*}
Finally,
\begin{align*}
	\frac{\partial \mathbf{G}_t}{\partial \gamma_i} &= \frac{\partial \ \text{diag}(g_{1t}, \ldots, g_{nt})}{\gamma_i} = \text{diag}\left(0, \ldots, \frac{\partial g_{it}}{\partial \gamma_i}, \ldots, 0\right)\\
	&= \frac{\partial g_{it}}{\partial \gamma_i}\text{diag}(0, \ldots, 1, \ldots) = \frac{\partial g_{it}}{\gamma_i} \mathbf{e}_i\mathbf{e}_i'
\end{align*}
where $\partial g_{it}/\partial \gamma_i = g_{it}(1-g_{it})(s_{i,t}-c_{i})$. This implies
\begin{equation*}
	\frac{\partial \bm{\varepsilon}_t}{\partial \gamma_i} = -\frac{\partial g_{it}}{\partial \gamma_i}\mathbf{e}_i\mathbf{e}_i'(\bm{\mu}_1 + \mathbf{\Phi}_1\mathbf{y}_{t-1}).
\end{equation*}
For the $i$-th error, the derivative w.r.t. $\gamma_i$ equals
\begin{equation*}
	\frac{\partial \varepsilon_{it}}{\partial \gamma_i} =  -\frac{\partial g_{it}}{\partial \gamma_i}(\bm{\mu}_1 + \mathbf{\Phi}_1\mathbf{y}_{t-1}).
\end{equation*}
Denoting $\mathbf{g}_{\bm{\gamma},t} = \left(\partial g_{1t}/\partial \gamma_1, \ldots, \partial g_{nt}/\partial \gamma_n\right)'$ and $\bm{\gamma} = \left(\gamma_1, \ldots, \gamma_n\right)'$, one can write, after removing the zero elements and using the notation $\partial \bm{\varepsilon}_t/\partial \bm{\gamma} = \left(\partial \varepsilon_{1t}/\partial \gamma_1, \ldots, \partial \varepsilon_{nt}/\partial \gamma_n\right)'$,
\begin{equation*}
	\frac{\partial \bm{\varepsilon}_t'}{\partial \bm{\gamma}} = -\mathbf{g}_{\bm{\gamma},t}' \otimes (\bm{\mu}_1 + \mathbf{\Phi}_1 \mathbf{y}_{t-1}),
\end{equation*}
which is an $n \times n$ matrix. Now, $\partial g_{1t}/\partial c_i = -\gamma_i g_{it}(1-g_{it})$. Analogously to the previous notation, setting $\mathbf{c} = \left(c_1, \ldots, c_n\right)'$ and $\mathbf{g}_{\mathbf{c},t} = \left(\partial g_{1t}/\partial c_1, \ldots, \partial g_{nt}/\partial c_n\right)'$, one obtains
\begin{equation*}
	\frac{\partial \bm{\varepsilon}_t'}{\partial \mathbf{c}} = -\mathbf{g}_{\mathbf{c},t} \otimes (\bm{\mu}_1 + \mathbf{\Phi}_1 \mathbf{y}_{t-1}).
\end{equation*}
Drawing things together, the score vector equals (assuming $\mathbf{y}_0$ known)
\begin{equation}\label{eq:averagescore}
	\begin{bmatrix}
		\mathbf{s}_T(\mathbf{\Phi}_0)\\ \mathbf{s}_T(\mathbf{\Phi}_1)\\ \mathbf{s}_T(\bm{\gamma}) \\ \mathbf{s}_T(\mathbf{c})
	\end{bmatrix} = -\frac{1}{T}\sum_{t=1}^{T}\begin{bmatrix}
		\left(\mathbf{I}_n \otimes \mathbf{y}_{t-1}\right)\mathbf{\Omega}_t^{-1}\bm{\varepsilon}_t\\
		\left(\mathbf{I}_n \otimes \mathbf{G}_t\mathbf{y}_{t-1}\right)\mathbf{\Omega}_t^{-1}\bm{\varepsilon}_t\\
		\left\{\mathbf{g}_{\bm{\gamma},t}' \otimes (\bm{\mu}_1 + \mathbf{\Phi}_1\mathbf{y}_{t-1})\right\}\mathbf{\Omega}^{-1}_t\bm{\varepsilon}_t\\
		\left\{\mathbf{g}_{\mathbf{c},t}' \otimes (\bm{\mu}_1 + \mathbf{\Phi}_1\mathbf{y}_{t-1})\right\}\mathbf{\Omega}^{-1}_t\bm{\varepsilon}_t
	\end{bmatrix}.
\end{equation}
When the model is the one in Eq. \eqref{eq:VLSTAR3regimesa}, the purpose could be testing one transition against two. In this case, the term $\mathbf{G}_t^{(2)}\left(\bm{\mu}_2 + \mathbf{\Phi}_2\mathbf{y}_{t-1}\right)$ can be approximated through a third-order Taylor expansion, so that the error vector becomes
\begin{equation*}
	\bm{\varepsilon}_t = \mathbf{y}_t - \bm{\mu}_0 - \mathbf{\Phi}_0 \mathbf{y}_{t-1} - \mathbf{G}_t^{(1)} - \sum_{l = 1}^{L=3}\mathbf{\Upsilon}_l(\mathbf{y}_{t-1} \odot \mathbf{s}_t^{l})
\end{equation*}
where $\mathbf{\Upsilon}_l = \text{diag}\left(\upsilon_{l1}, \ldots, \upsilon_{ln}\right)$ and $\mathbf{s}_t^{l} = \left(s_{1t}^l, \ldots, s_{nt}^l\right)'$, $l = 1, 2, 3$. Then, $\partial \mathbf{\Upsilon}_l/\partial \upsilon_{lj} = \mathbf{e}_j\mathbf{e}_j'$ and $\partial \bm{\varepsilon}_t/\partial \upsilon_{lj} = \mathbf{e}_j'(\mathbf{y}_{t-1} \odot \mathbf{s}_t^{i})$. Finally, defining
\begin{equation*}
	\frac{\partial \bm{\varepsilon}_t}{\partial \bm{\upsilon}_l} = \left(\frac{\partial \bm{\varepsilon}_t}{\partial \upsilon_{l1}}, \ldots, \frac{\partial \bm{\varepsilon}_t}{\partial \upsilon_{ln}}\right)', \quad \text{for } l= 1,2,3
\end{equation*}
the corresponding components for the average score are
\begin{equation*}
	\mathbf{s}_T(\bm{\upsilon}_l) = \frac{1}{T}\sum_{t=1}^{T}\left\{\left(\mathbf{y}_{t-1} \odot \mathbf{s}_t^{l}\right) \otimes \mathbf{I}_n\right\} \mathbf{\Omega}_t^{-1}\bm{\varepsilon}_t.
\end{equation*}
Equivalently, the average score for the special case of testing linearity against a single transition can be obtained by suppressing the redundant rows from \eqref{eq:averagescore}.

\section{Proof of Theorem 1}
\label{sec:AppendixA}

\begin{proof}
Assuming a first-order Taylor expansion, the Lagrange multiplier test under the null is derived from the score matrix
\begin{equation}\label{eq:Scoreapp}
	\frac{\partial \ell_T(\hat{\bm{\theta}})}{\partial \mathbf{D}_1} = \sum_{t=1}^{T}\left\{\mathbf{x}_t\mathbf{s}_t\left(\mathbf{y}_t - \mathbf{\hat{D}}_0'\mathbf{x}_t\right)' \hat{\mathbf{\Omega}}^{-1}\right\}.
\end{equation}
Setting $\mathbf{Y} = \left[\mathbf{y}_1' \ \mathbf{y}_2' \ \ldots \ \mathbf{y}_T'\right]'$, $\mathbf{Z} = \left[x_1's_1 \ x_2's_2 \ \ldots  \ x_T's_T\right]$, we have that
\begin{equation}\label{eq:Scoreapp2}
	\frac{\partial \ell_T(\hat{\bm{\theta}})}{\partial \mathbf{D}_1} = \mathbf{Z}' \left(\mathbf{Y} - \mathbf{X}\mathbf{\hat{D}}_0\right)\hat{\mathbf{\Omega}}^{-1},
\end{equation}
where $\hat{\mathbf{D}}_0$ and $\hat{\mathbf{\Omega}}$ are estimates under the null hypothesis. Under regularity conditions, the score converges in probability to a matrix-normal distribution with zero mean and variance $\mathbf{Z}'(\mathbf{I} - \mathbf{P})Z \otimes \mathbf{\Omega}^{-1}$ conditional on $\mathbf{X}$ and $\mathbf{Z}$, where $\mathbf{P}_X \equiv \mathbf{X}(\mathbf{X}'\mathbf{X})^{-1}\mathbf{X}$ is the projection matrix of $\mathbf{X}$.

To see this, we write \eqref{eq:Scoreapp2} as follows
\begin{align*}
\mathbf{Q} &\equiv \frac{\partial \ell_T(\hat{\bm{\theta}})}{\partial \mathbf{D}_1} = \mathbf{Z}'\left(\mathbf{Y} - \mathbf{X}\hat{\mathbf{D}}_0\right)\hat{\mathbf{\Omega}}^{-1}\\
&= \mathbf{Z}'\left(\mathbf{Y} - \mathbf{X}(\mathbf{X}'\mathbf{X})\mathbf{X}'\mathbf{Y}\right)\hat{\mathbf{\Omega}}^{-1}\\
&= \mathbf{Z}'\left(\mathbf{I} - \mathbf{P}_X\right)\left(\mathbf{X}\mathbf{D}_0 + \mathbf{E}\right)\hat{\mathbf{\Omega}}^{-1}\\
&= \mathbf{Z}'\left(\mathbf{I} - \mathbf{P}_X\right)\mathbf{E}\hat{\mathbf{\Omega}}^{-1}.
\end{align*}
Under the null hypothesis, $\mathbf{Y} = \mathbf{X}\mathbf{D}_0 + \mathbf{E}$, where $\mathbf{E} = \left[\bm{\varepsilon}_1' \ \ldots \ \bm{\varepsilon}_T'\right]'$ and $\text{vec}(\mathbf{E}')$ follows a $\mathcal{N}\left(\mathbf{0}, \mathbf{I}_T\otimes \mathbf{\Omega}\right)$ distribution. Under the null hypothesis, $\hat{\mathbf{\Omega}}$ will converge in probability to $\mathbf{\Omega}$. Set
\begin{equation*}
\mathbf{S} = \left(\mathbf{Z}'(\mathbf{I} - \mathbf{P}_X)\mathbf{Z}\right)^{-1/2}\mathbf{Q}\hat{\mathbf{\Omega}}^{1/2}
\end{equation*}
which will asymptotically converge to a zero-mean matrix-normal distribution with variance $\mathbf{I} \otimes \mathbf{I}$. Thus, we have the chi-square version LM test statistic
\begin{equation*}
LM  = \text{tr}\left\{\mathbf{S}'\mathbf{S}\right\} = \text{tr}\left\{\hat{\mathbf{\Omega}}^{-1}(\mathbf{Y} - \mathbf{X}\hat{\mathbf{D}})'\mathbf{Z}[\mathbf{Z}'(\mathbf{I}_T - \mathbf{P}_X)\mathbf{Z}]^{-1}\mathbf{Z}'(\mathbf{Y}-\mathbf{X}\hat{\mathbf{D}}_0)\right\}
\end{equation*}
which converges to the $\chi^2_{n(1+np)}$-distribution when the null hypothesis is valid.
\end{proof}

\section{First-order partial derivatives of $\mathbf{\Psi}_t' \mathbf{B}' \mathbf{x}_t$}
\label{sec:firstderiv}

The vectorised first-order derivative of $\mathbf{\Psi}_t' \mathbf{B}' \mathbf{x}_t$ with respect to parameters $\bm{\theta}$ can be easily found in both univariate and multivariate cases, see \cite{Eitrheim1996} and \cite{yate14}. The set of parameters $\bm{\theta}$ consists of $\text{vec}(\mathbf{B})$, $\bm{\gamma}$ and $\mathbf{C}$, where $\mathbf{B} = \left[b_{ij}\right]$, $\bm{\gamma} = \left[\gamma_{ij}\right]$ and $\mathbf{C} = \left[c_{ij}\right]$.

For the $ij$-th parameter of $\mathbf{B}$, $b_{ij}$, we have
\begin{equation}\label{eq:firstder}
	\frac{\partial \mathbf{\Psi}_t' \mathbf{B}' \mathbf{x}_t}{\partial b_{ij}} = \mathbf{\Psi}_t'\mathbf{H}_{ij}\mathbf{x}_t
\end{equation}
where $\mathbf{H}_{ij} = \left[h_{kl}\right]$ is a matrix in which $h_{ij} = 1$ and $h_{kl} = 0$ for $k \neq i$ and $l \neq i$. The first derivative in \eqref{eq:firstder} is the directional derivative of $\mathbf{\Psi}_t' \mathbf{B}' \mathbf{x}_t$ with respect to the unit length matrix $\mathbf{H}_{ij}$.

For the parameter matrices $\bm{\gamma} = \left[\gamma_{ij}\right]$ and $\mathbf{C} = \left[c_{ij}\right]$, letting $\beta_{ij} = \gamma_{ij}, c_{ij}$, we have
	\begin{equation}
	\frac{\partial \mathbf{\Psi}_t' \mathbf{B}' \mathbf{x}_t}{\partial \beta_{ij}} = \left(0_n, \ldots, \frac{\partial \mathbf{G}_t^{(d)}}{\partial \beta_{ij}}, \ldots, 0_n\right)\mathbf{B}' \mathbf{x}_t  = \frac{\partial \mathbf{G}_t^{(d)}}{\partial \beta_{ij}}\mathbf{B}_{d+1}'\mathbf{x}_t
	\end{equation}
for $d = 1, \ldots, m-1$, where
\begin{equation}
\frac{\partial \mathbf{G}_t^{(d)}}{\beta_{ij}} = \text{diag}\left\{0, \ldots, \frac{\partial g_t^{ij}}{\beta_{ij}}, \ldots, 0\right\}
\end{equation}
for $j = 1, \ldots, n$. When $\beta_{ij} = \gamma_{ij}$,
\begin{equation}
\frac{\partial g^{ij}_t}{\gamma_{ij}} = \left(g_t^{ij}\right)^2 \exp \left\{-\gamma_{ij}\left(s_t - c_{ij}\right)\right\}\left(s_t - c_{ij}\right) = \left(s_t - c_{ij}\right)g_t^{ij}\left(1 - g_t^{ij}\right)
\end{equation}
and when $\beta_{ij} = c_{ij}$,
\begin{equation}
\frac{\partial g_t^{ij}}{\partial c_{ij}} = -\left(g_t^{ij}\right)^2 \text{exp}\left\{-\gamma_{ij}\left(s_t - c_{ij}\right)\right\}\gamma_{ij} = -\gamma_{ij} g_t^{ij}\left(1 - g_t^{ij}\right).
\end{equation}
The dimension of the first-order derivative of $\mathbf{\Psi}_t' \mathbf{B}' \mathbf{x}_t$ with respect to $\bm{\theta}$ is $n \times \left[(pn + 1)mn + 2(m-1)n\right]$.


\section{Wilks' extension of the test}
\label{sec:Ftests}
As in \cite{yate14}, we use an improvement to the rescaled LM-type test presented in Eq. \eqref{eq:LMadj} based on the so-called Wilks' $\Lambda$-distribution. 

\begin{theorem}
Let $\mathbf{\text{RSS}}_0$ and $\mathbf{\text{RSS}}_1$ be the $n \times n$ residual sum of squares from respectively the restricted and the auxiliary regression, and let $\mathbf{W}_1 = \mathbf{\text{RSS}}_0 - \mathbf{\text{RSS}}_1$ and $\mathbf{W}_2 = \mathbf{\text{RSS}}_1$. In the linearity test, under the null, $\mathbf{W}_1$ and $\mathbf{W}_2$ are two independent Wishart-distributed random matrices 
\begin{equation}
\mathbf{W}_1 \sim \mathcal{W}_n\left(\mathbf{\Omega}, \text{cd}(\mathbf{Z})\right), \qquad \mathbf{W}_2 \sim \mathcal{W}_n \left(\mathbf{\Omega}, T - \text{cd}(\mathbf{X}) - \text{cd}(\mathbf{Z})\right)
\end{equation}
where $\text{cd}(\cdot)$ is the column dimension of a matrix.
\end{theorem}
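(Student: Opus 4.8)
The plan is to recognize the statement as the classical multivariate (MANOVA) partition of residual variation and to prove it through the distribution theory of quadratic forms in a matrix-normal sample. Write $\mathbf{M}_{\mathbf{X}} = \mathbf{I}_T - \mathbf{P}_{\mathbf{X}}$ for the residual-maker of the restricted regression and, setting $\mathbf{A} = [\mathbf{X} \ \mathbf{Z}]$ with $\mathbf{P} = \mathbf{A}(\mathbf{A}'\mathbf{A})^{-1}\mathbf{A}'$, write $\mathbf{M} = \mathbf{I}_T - \mathbf{P}$ for the residual-maker of the auxiliary regression. Then $\mathbf{\text{RSS}}_0 = \mathbf{Y}'\mathbf{M}_{\mathbf{X}}\mathbf{Y}$ and $\mathbf{\text{RSS}}_1 = \mathbf{Y}'\mathbf{M}\mathbf{Y}$, so that $\mathbf{W}_1 = \mathbf{Y}'(\mathbf{P} - \mathbf{P}_{\mathbf{X}})\mathbf{Y}$ and $\mathbf{W}_2 = \mathbf{Y}'\mathbf{M}\mathbf{Y}$. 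First I would reduce both forms to the error matrix: under $\text{H}_0$ we have $\mathbf{Y} = \mathbf{X}\mathbf{D}_0 + \mathbf{E}$ with $\text{vec}(\mathbf{E}') \sim \mathcal{N}(\mathbf{0}, \mathbf{I}_T \otimes \mathbf{\Omega})$, and since the column space of $\mathbf{X}$ is contained in that of $\mathbf{A}$, both $\mathbf{M}$ and $\mathbf{P} - \mathbf{P}_{\mathbf{X}}$ annihilate $\mathbf{X}\mathbf{D}_0$. Hence $\mathbf{W}_1 = \mathbf{E}'(\mathbf{P} - \mathbf{P}_{\mathbf{X}})\mathbf{E}$ and $\mathbf{W}_2 = \mathbf{E}'\mathbf{M}\mathbf{E}$, and everything becomes a statement about quadratic forms in $\mathbf{E}$ alone.

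Next I would establish the algebraic facts about the two kernels. Using $\mathbf{P}_{\mathbf{X}}\mathbf{P} = \mathbf{P}\mathbf{P}_{\mathbf{X}} = \mathbf{P}_{\mathbf{X}}$ (nested column spaces), one checks that $\mathbf{P} - \mathbf{P}_{\mathbf{X}}$ and $\mathbf{M}$ are both symmetric and idempotent, with ranks $\text{rank}(\mathbf{P} - \mathbf{P}_{\mathbf{X}}) = \text{cd}(\mathbf{A}) - \text{cd}(\mathbf{X}) = \text{cd}(\mathbf{Z})$ and $\text{rank}(\mathbf{M}) = T - \text{cd}(\mathbf{X}) - \text{cd}(\mathbf{Z})$. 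These rank identities are exactly where the positive-definiteness in Assumption 4 enters, since the full column rank of $\mathbf{X}$ and of $(\mathbf{I}_T - \mathbf{P}_{\mathbf{X}})\mathbf{Z}$ guarantees that $\mathbf{A}$ has column rank $\text{cd}(\mathbf{X}) + \text{cd}(\mathbf{Z})$. The crucial observation is $(\mathbf{P} - \mathbf{P}_{\mathbf{X}})\mathbf{M} = \mathbf{0}$, so the two projections have mutually orthogonal ranges.

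Then I would invoke the matrix-normal quadratic-form lemma. For any symmetric idempotent $\mathbf{P}_\ast$ of rank $r$, factor $\mathbf{P}_\ast = \mathbf{H}\mathbf{H}'$ with $\mathbf{H}'\mathbf{H} = \mathbf{I}_r$; then $\mathbf{E}'\mathbf{P}_\ast\mathbf{E} = (\mathbf{H}'\mathbf{E})'(\mathbf{H}'\mathbf{E})$, and the rows of $\mathbf{F} = \mathbf{H}'\mathbf{E}$ are independent $\mathcal{N}(\mathbf{0}, \mathbf{\Omega})$ because their joint covariance is $(\mathbf{H}'\mathbf{H}) \otimes \mathbf{\Omega} = \mathbf{I}_r \otimes \mathbf{\Omega}$. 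By the very definition of the Wishart law this yields $\mathbf{E}'\mathbf{P}_\ast\mathbf{E} = \sum_{i=1}^{r}\mathbf{f}_i\mathbf{f}_i' \sim \mathcal{W}_n(\mathbf{\Omega}, r)$, giving $\mathbf{W}_1 \sim \mathcal{W}_n(\mathbf{\Omega}, \text{cd}(\mathbf{Z}))$ and $\mathbf{W}_2 \sim \mathcal{W}_n(\mathbf{\Omega}, T - \text{cd}(\mathbf{X}) - \text{cd}(\mathbf{Z}))$. For independence I would take $\mathbf{H}_1, \mathbf{H}_2$ spanning the two orthogonal ranges, so $\mathbf{H}_1'\mathbf{H}_2 = \mathbf{0}$ makes $\mathbf{H}_1'\mathbf{E}$ and $\mathbf{H}_2'\mathbf{E}$ jointly Gaussian and uncorrelated, hence independent, which is the multivariate analogue of Craig's theorem; the two Wishart matrices, being measurable functions of these independent blocks, are then independent.

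The hard part will be the independence claim together with the exactness of the Wishart law, both of which rest on genuine Gaussianity of $\mathbf{E}$. Under Assumption 3 the innovations are only a martingale difference sequence with finite moments, so the exact finite-sample statement holds under the stated normality of $\mathbf{E}$ (as used in the simulation design), while away from Gaussianity the conclusion must be read asymptotically, combining $\hat{\mathbf{\Omega}} \to \mathbf{\Omega}$ in probability with a central limit argument for the orthogonalised blocks $\mathbf{H}'\mathbf{E}$. I would therefore state explicitly that the result presumes Gaussian innovations, and emphasise that the single identity $(\mathbf{P} - \mathbf{P}_{\mathbf{X}})\mathbf{M} = \mathbf{0}$ simultaneously delivers the correct degrees-of-freedom split and the independence of $\mathbf{W}_1$ and $\mathbf{W}_2$.
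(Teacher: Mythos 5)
Your proof is correct and follows essentially the same route as the paper's: both reduce $\mathbf{W}_1$ and $\mathbf{W}_2$ to Gaussian quadratic forms in $\mathbf{E}$ over the same pair of orthogonal subspaces (the paper's $\mathbf{V}_1 = \mathbf{Z}'\mathbf{R}\mathbf{R}'\mathbf{E}$ and $\mathbf{V}_2 = \mathbf{Q}'\mathbf{R}'\mathbf{E}$ are just non-orthonormal and orthonormal coordinates for the ranges of your $\mathbf{P}-\mathbf{P}_{\mathbf{X}}$ and $\mathbf{M}$), and both conclude independence from the vanishing cross-covariance of jointly Gaussian blocks. Your Cochran-style packaging via the identity $(\mathbf{P}-\mathbf{P}_{\mathbf{X}})\mathbf{M}=\mathbf{0}$ is a cleaner presentation of the same argument, and your explicit caveat that exactness requires Gaussian $\mathbf{E}$ matches the normality the paper itself invokes in its proof.
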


\begin{proof}
The score matrix evaluated under the null hypothesis has the general form
\begin{equation*}
\frac{\partial \ell_T(\hat{\bm{\theta}})}{\partial \mathbf{D}_1} = \mathbf{Z}'\left(\mathbf{Y} - \mathbf{X}\hat{\mathbf{D}}_0\right)\hat{\mathbf{\Omega}}^{-1}.
\end{equation*}
Using the auxiliary approach for computing the test statistic produces two residual sum of squares, $\mathbf{\text{RSS}}_0$ and $\mathbf{\text{RSS}}_1$. The former is the residual sum of the squares matrix from the restricted regression, \textit{i.e.}, $\mathbf{\text{RSS}}_0 = \hat{\mathbf{E}}'\hat{\mathbf{E}}$, $\hat{\mathbf{E}} = (\mathbf{I} - \mathbf{P}_X)\mathbf{Y}$, where $\mathbf{P}_X$ is the projection matrix of $\mathbf{X}$. Notice that under the null hypothesis, $\mathbf{Y} = \mathbf{X}\mathbf{D}_0 + \mathbf{E}$, where $\mathbf{E} = \left[\bm{\varepsilon}_1' \ \ldots \ \bm{\varepsilon}_T'\right]'$, and $\text{vec}(\mathbf{E}') \sim \mathcal{N}(\mathbf{0}, \mathbf{I}_T \otimes \mathbf{\Omega})$. Consequently, we have that $\hat{\mathbf{E}} = (\mathbf{I} - \mathbf{P}_X)\mathbf{Y} = (\mathbf{I} - \mathbf{P}_X)\mathbf{E}$. $\mathbf{RSS}_1$ is the residual sum of the squares matrix from the auxiliary regression, \textit{i.e.}, $\mathbf{RSS}_1 = \hat{\mathbf{\Xi}}'\hat{\mathbf{\Xi}}$, where $\hat{\mathbf{\Xi}} = (\mathbf{I} - \mathbf{P}_{XZ})\hat{\mathbf{E}}$, where $\mathbf{P}_{XZ}$ is the projection matrix of the matrix $\left[\mathbf{X}, \mathbf{Z}\right]$, such that
\begin{equation*}
\mathbf{P}_{XZ} = \left[\mathbf{X} \ \mathbf{Z}\right] = \begin{bmatrix}
\mathbf{X'X} & \mathbf{X'Z}\\
\mathbf{Z'X} & \mathbf{Z'Z}
\end{bmatrix}^{-1} \begin{bmatrix}
\mathbf{X'} \\ \mathbf{Z'}
\end{bmatrix}.
\end{equation*}
Therefore, $\mathbf{W}_1$ can be reformulated as follows
\begin{align*}
\mathbf{W}_1 &= \mathbf{\text{RSS}}_0 - \mathbf{\text{RSS}}_1 = \hat{\mathbf{E}}'\hat{\mathbf{E}} - \hat{\mathbf{\Xi}}'\hat{\mathbf{\Xi}}\\
&=  \hat{\mathbf{E}}'\mathbf{P}_{XZ} \hat{\mathbf{E}} =  \hat{\mathbf{E}}'\mathbf{Z}(\mathbf{Z}'(\mathbf{I}_T - \mathbf{P}_X)\mathbf{Z})^{-1}\mathbf{Z}'\hat{\mathbf{E}}\\
&= \mathbf{E}'(\mathbf{I}_T - \mathbf{P}_X)\mathbf{Z}(\mathbf{Z}'(\mathbf{I}_T - \mathbf{P}_X)\mathbf{Z})^{-1}\mathbf{Z}'(\mathbf{I}_T - \mathbf{P}_X)\mathbf{E}.
\end{align*}
Let $\mathbf{I}_T - \mathbf{P}_X = \mathbf{R}'\mathbf{R}$, where $\mathbf{R}$ is orthogonal to $\mathbf{X}$ and $\mathbf{R}'\mathbf{R} = \mathbf{I}_{T - \text{cd}(X)}$, then
\begin{equation*}
\mathbf{W}_1 = \mathbf{E}'\mathbf{R}'\mathbf{R}\mathbf{Z}(\mathbf{Z}'\mathbf{R}\mathbf{R}'\mathbf{Z})^{-1}\mathbf{Z}'\mathbf{R}\mathbf{R}'\mathbf{E}.
\end{equation*}
Setting $\mathbf{V}_1 = \mathbf{Z}'\mathbf{R}\mathbf{R}'\mathbf{E}$, we have that $\mathbf{V}_1 \sim \mathcal{N}(\mathbf{0}, \mathbf{Z'RR'Z} \otimes \mathbf{\Omega})$, therefore $\mathbf{W}_1$ follows a Wishart distribution generated by $\mathbf{V}_1$:
\begin{equation*}
	\mathbf{W}_1 = \mathbf{V}_1'(\mathbf{Z'RR'Z} )^{-1}\mathbf{V}_1 \sim \mathcal{W}_n(\mathbf{\Omega}, \text{cd}(\mathbf{Z})).
\end{equation*}
For $\mathbf{W}_2$, we obtain
\begin{align*}
\mathbf{W}_2 &= \mathbf{RSS}_1 = \hat{\mathbf{\Xi}}'\hat{\mathbf{\Xi}} = \hat{\mathbf{E}}'(\mathbf{I} - \mathbf{P}_{XZ})\hat{\mathbf{E}} = \hat{\mathbf{E}}'\hat{\mathbf{E}} - \hat{\mathbf{E}}'\mathbf{P}_{XZ}\hat{\mathbf{E}}\\
&= \hat{\mathbf{E}}'\hat{\mathbf{E}} - \hat{\mathbf{E}}'\mathbf{Z}(\mathbf{Z}'(\mathbf{I} - \mathbf{P}_X)\mathbf{Z})^{-1}\mathbf{Z}'\hat{\mathbf{E}}\\
&= \mathbf{E}'\mathbf{RR'(I-Z(Z'RR'Z)^{-1}}\mathbf{Z'})\mathbf{RR'E}\\
&= \mathbf{E'R}(\mathbf{I}-\mathbf{R'Z}(\mathbf{Z'RR'Z})^{-1}\mathbf{Z'R})\mathbf{R'E}.
\end{align*}
By imposing $\mathbf{I}_{T-\text{cd}(X)} - \mathbf{R'Z}(\mathbf{Z'RR'Z})^{-1}\mathbf{Z'R} = \mathbf{QQ'}$, where $\mathbf{Q} \bot \mathbf{R'Z}$ (with $\bot$ indicating orthogonality) and $\mathbf{QQ'} = \mathbf{I}_{T-\text{cd}(X)-\text{cd}(Z)}$, we have that $\mathbf{W}_2 = \mathbf{E'RQQ'R'E}$. Setting $\mathbf{V}_2 = \mathbf{Q'R'E}$, we have that $\mathbf{V}_2 \sim \mathcal{N}(\mathbf{0}, \mathbf{I}\otimes \mathbf{\Omega})$ and that $\mathbf{W}_2$ follows a Wishart distribution generated by $\mathbf{V}_2$, such that
\begin{equation*}
	\mathbf{W}_2 = \mathbf{V}_2'\mathbf{V}_2 \sim \mathcal{W}_n(\mathbf{\Omega}, T-\text{cd}(X)-\text{cd}(Z)).
\end{equation*}
Stacking the columns of $\mathbf{V}_1$ and $\mathbf{V}_2$ yields the random matrix
\begin{equation*}
\mathbf{U} = \begin{pmatrix}
	\mathbf{V}_1 \\ \mathbf{V}_2
\end{pmatrix} = \begin{pmatrix}
\mathbf{Z'R} \\ \mathbf{Q'}
\end{pmatrix}\mathbf{R'E}.
\end{equation*}
It follows that $\mathbf{U} \sim \mathcal{N}(\mathbf{0}, \mathbf{\Sigma} \otimes \mathbf{\Omega})$, where the row covariance matrix is
\begin{equation*}
\mathbf{\Sigma} = \begin{pmatrix}
	\mathbf{Z'R} \\ \mathbf{Q'}
\end{pmatrix}\mathbf{R'R}(\mathbf{R'ZQ}) = \begin{pmatrix}
\mathbf{Z'RR'Z} & \mathbf{Z'RQ}\\
\mathbf{Q'R'Z} & \mathbf{Q'Q}
\end{pmatrix} = \begin{pmatrix}
\mathbf{Z'RR'Z} &  \mathbf{0} \\
\mathbf{0} & \mathbf{I}
\end{pmatrix}
\end{equation*}
given that $\mathbf{Q} \bot \mathbf{R'Z}$. We can conclude that $\mathbf{V}_1$ and $\mathbf{V}_2$ are uncorrelated and independent due to normality, consequently $\mathbf{W}_1$ and $\mathbf{W}_2$ are independent as desired.
\end{proof}

When $\mathbf{A} \sim \mathcal{W}_n(\mathbf{\Sigma}, m)$ and $\mathbf{B} \sim \mathcal{W}_n(\mathbf{\Sigma}, d)$ are independent, $\mathbf{\Sigma}$ is an $n \times n$ positive definite matrix, $m \geq n$, the Wilks' statistic is
\begin{equation}
\Lambda = \frac{|\mathbf{A}|}{|\mathbf{A} + \mathbf{B}|} = |\mathbf{I}_n + \mathbf{A}^{-1} + \mathbf{B}|^{-1} \sim \mathcal{W}(n, m, d)
\end{equation}
and has a Wilks' $\Lambda$-distribution with parameters $n$, $m$ and $d$, see \cite{Mardia1979} and \cite{anderson2003} for a discussion on the Wilks' $\Lambda$ distribution. Setting $\mathbf{A} = \mathbf{W}_2$ and $\mathbf{B} = \mathbf{W}_2$, the test statistic can be written as
\begin{equation}\label{eq:Wilks}
\Lambda = \frac{|\mathbf{W}_2|}{|\mathbf{W}_2 + \mathbf{W}_1|} = \frac{|\mathbf{RSS}_1|}{|\mathbf{RSS}_0|}
\end{equation}
which, under linearity, follows a Wilks' $\Lambda$-distribution $\mathcal{L}(n, T - \text{cd}(\mathbf{X}) - \text{cd}(\mathbf{Z}), \text{cd}(\mathbf{Z}))$.  If $T$ is large, the Bartlett's approximation can be used
\begin{equation*}
\lambda = \left(\frac{1}{2}(n + \text{cd}(\mathbf{Z}) + 1)  + \text{cd}(\mathbf{X}) - T\right) \log \Lambda \sim \chi^2_{\text{cd}(\mathbf{Z})n},
\end{equation*}
see \cite{Bartlett1954} and \cite{anderson2003}.

The test statistic can be carried out after performing steps 1 and 2 in the algorithm in Section \ref{sec:Linearity}. Since matrix $\mathbf{Z}$ can be set to any $\mathbf{Z}$ in the sequential test procedure in Section \ref{sec:Nregimes}, the improved test statistic can be computed for both the tests used in this paper for the sequential procedure.

\end{document}